\documentclass[a4paper]{article}
\pdfoutput=1
\synctex=1

\usepackage[final]{microtype}
\usepackage{local-macros}
\usepackage{local-amsthm}
\usepackage{newpxtext}
\usepackage{newpxmath}
\usepackage[numbers]{natbib}

\title{What should a generic object be?}
\author{Jonathan Sterling}
\date{Updated \today}

\begin{document}
\maketitle

\begin{abstract}
  Jacobs has proposed definitions for (weak, strong, split) generic objects for
  a fibered category; building on his definition of (split) generic objects,
  Jacobs develops a menagerie of important fibrational structures with
  applications to categorical logic and computer science, including
  \emph{higher order fibrations}, \emph{polymorphic fibrations},
  \emph{$\lambda2$-fibrations}, \emph{triposes}, and others.  We observe that a
  {split generic object} need not in particular be a {generic object} under the
  given definitions, and that the definitions of polymorphic fibrations,
  triposes, \etc are strict enough to rule out some fundamental examples: for
  instance, the fibered preorder induced by a partial combinatory algebra in
  realizability is not a tripos in this sense.
  We propose a new alignment of terminology that emphasizes the forms of
  generic object appearing most commonly in nature, \ie in the study of
  internal categories, triposes, and the denotational semantics of
  polymorphism. In addition, we propose a new class of \ul{acyclic} generic
  objects inspired by recent developments in higher category theory and the
  semantics of homotopy type theory, generalizing the \emph{realignment}
  property of universes to the setting of an arbitrary fibration.
\end{abstract}

\tableofcontents

\NewDocumentCommand\FAM{m}{\Kwd{Fam}\prn{#1}}

\section{Introduction}

Since the latter half of the 20th century, \emph{fibered category theory} or
the \emph{theory of fibrations} has played an important background role in both
the applications and foundations of category theory~\citep{sga:1,benabou:1985}.
Fibered categories, also known as fibrations, are a formalism for manipulating
categories that are defined \emph{relative} to another category, generalizing
the way that ordinary categories can be thought of as being defined relative to
the category of sets.

The sense in which ordinary category theory is pinned to the category of sets
can be illustrated by considering the definition of when a category $\CCat$
``has products'':

\begin{definition}\label[definition]{def:cat-has-products}
  A category $\CCat$ has \textbf{products} when for any indexed family
  $\brc{E_i\in \CCat}\Sub{i\in I}$ of objects, there exists an object
  $\Prod{i\in I}{E_i}\in\CCat$ together with a family of morphisms
  $\Mor[p_k]{\Prod{i\in I}E_i}{E_k}$ such that for any family of morphisms
  $\Mor[h_k]{H}{E_k}$ there exists a unique morphism $\Mor[h]{H}{\Prod{i\in
  I}E_i}$ factoring each $h_k$ through $p_k$.
\end{definition}

In the above, the dependency on the category $\SET$ is clear: the indexing
object $I$ is a set. If we had required $I$ to be drawn from a proper
subcategory of $\SET$ (\eg finite sets) or a proper supercategory (\eg
classes), the notion of product defined thereby would have been different. The purpose
of the formalism of \emph{fibered categories} is to explicitly control the
ambient category that parameterizes all indexed notions, such as products,
sums, limits, colimits, \etc.

\begin{remark}[Relevance to computer science]\label{rem:small-complete-category}
  The ability to explicitly control the parameterization of products and sums
  is very important in theoretical computer science, especially for the
  denotational semantics of \emph{polymorphic types} of the form
  $\forall\alpha. \tau\brk{\alpha}$. Such a polymorphic type should be
  understood as the product of all $\tau\brk{\alpha}$ indexed in the ``set'' of
  all types, but a famous result of \citet{freyd:1964} shows that if a category
  $\CCat$ has products of this form \emph{parameterized in $\SET$}, then
  $\CCat$ must be a preorder. Far from bringing to an early end the study of
  polymorphic types in computer science, awareness of Freyd's result sparked
  and guided the search for ambient categories other than $\SET$ in which to
  parameterize these
  products~\citep{pitts:1987,hyland:1988,hyland-robinson-rosolini:1990}.
  Fibered category theory provides the optimal language to understand all such
  indexing scenarios, and the textbook of \citet{jacobs:1999}, discussed at
  length in the present paper, provides a detailed introduction to the
  applications of fibered category theory to theoretical computer science.
\end{remark}

\subsection{Introduction to fibered categories}

Before giving a general definition, we will see the way that fibered
categorical language indeed makes parameterization explicit by considering the
\emph{prototype} of all fibered categories, the category $\FAM{\CCat}$ of
$\SET$-indexed families of objects of a category $\CCat$.

\begin{construction}[The category of families]\label{con:family-fibration}
  We define $\FAM{\CCat}$ to be the category of $\SET$-indexed families in $\CCat$, such that
  \begin{enumerate}
    \item an object of $\FAM{\CCat}$ is a family $\brc{E_i\in\CCat}\Sub{i\in I}$ where $I$ is a set,
    \item a morphism $\Mor{\brc{E_i}\Sub{i \in I}}{\brc{F_j}\Sub{j\in J}}$ in $\FAM{\CCat}$ is given by a function $\Mor[u]{I}{J}$ together with for each $i\in I$ a morphism $\Mor[\bar{u}_i]{E_i}{F\Sub{ui}}$.
  \end{enumerate}

  There is an evident functor $\Mor[p]{\FAM{\CCat}}{\SET}$ taking $\prn{\brc{E_i}\Sub{i\in I}}$ to $I$.
\end{construction}

\begin{construction}[Fiber categories]
  For each $I\in \SET$, we may define the \emph{fiber} $\FAM{\CCat}_I$ of
  $\FAM{\CCat}$ over $I$ to be the category of $I$-indexed families $\brc{E_i\in
  \CCat}\Sub{i\in I}$ in $\CCat$, with morphisms $\Mor{\brc{F_i}\Sub{i\in
  I}}{\brc{E_i}\Sub{i\in I}}$ given by morphisms $\Mor[h_i]{F_i}{E_i}$ for each $i\in
  I$.
\end{construction}

More abstractly, the fiber category $\FAM{\CCat}_I$ is the following pullback:
\[
  \DiagramSquare{
    ne = \FAM{\CCat},
    se = \SET,
    east = p,
    south = I,
    sw = \Kwd{1},
    nw = \FAM{\CCat}_I,
    width = 2.5cm,
    nw/style = pullback,
  }
\]

\begin{construction}[Reindexing functors]\label{con:reindexing}
  For any function $\Mor[u]{J}{I}$, there is a corresponding reindexing functor
  $\Mor[u^*]{\FAM{\CCat}_I}{\FAM{\CCat}_J}$ that restricts an $I$-indexed family
  into a $J$-indexed family by precomposition.
\end{construction}

With the reindexing functors in
hand, can now rephrase the condition that $\CCat$ has products
(\cref{def:cat-has-products}) in terms of $\FAM{\CCat}$.

\begin{proposition}
  A category $\CCat$ has products if and only if for each product projection
  function $\Mor[\pi\Sub{I,J}]{I\times J}{I}$, the reindexing functor
  $\Mor[\pi\Sub{I,J}]{\FAM{\CCat}\Sub{I}}{\FAM{\CCat}\Sub{I\times J}}$ has a
  right adjoint $\Mor[\Prod{\prn{I,J}}]{\FAM{\CCat}\Sub{I\times
  J}}{\FAM{\CCat}_I}$ such that the following \emph{Beck--Chevalley} condition
  holds: for any function $\Mor[u]{K}{I}$, the canonical natural transformation
  $\Mor{u^*\circ \Prod{\prn{I,J}}}{\Prod{\prn{K,J}}\circ \prn{u\times
  \Idn{J}}^*}$ is an isomorphism.
\end{proposition}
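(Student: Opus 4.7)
The plan is to prove the two implications separately, with the forward direction being a direct pointwise construction of the right adjoint and the backward direction being an extraction of products by specializing the base to a singleton.

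For the ``only if'' direction, I would assume $\CCat$ has products and, given an $(I\times J)$-indexed family $\brc{E\Sub{(i,j)}}$, define $\Prod{\prn{I,J}}\brc{E\Sub{(i,j)}}$ to be the $I$-indexed family whose $i$-th component is the product $\Prod{j\in J} E\Sub{(i,j)}$ in $\CCat$. Since $\pi\Sub{I,J}^*\brc{F_i}$ is the $(I\times J)$-indexed family with $(i,j)$-component $F_i$, the universal property of the product in $\CCat$ supplies, fiberwise, a natural bijection between $\FAM{\CCat}_I$-morphisms $\Mor{\brc{F_i}}{\Prod{\prn{I,J}}\brc{E\Sub{(i,j)}}}$ and $\FAM{\CCat}\Sub{I\times J}$-morphisms $\Mor{\pi\Sub{I,J}^*\brc{F_i}}{\brc{E\Sub{(i,j)}}}$; this is exactly the desired adjunction. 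Beck--Chevalley for $\Mor[u]{K}{I}$ then falls out by direct inspection: at each $k\in K$, both $u^*\Prod{\prn{I,J}}\brc{E\Sub{(i,j)}}$ and $\Prod{\prn{K,J}}\prn{u\times \Idn{J}}^*\brc{E\Sub{(i,j)}}$ compute to $\Prod{j\in J} E\Sub{(uk,j)}$, so the canonical comparison is the identity.

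For the ``if'' direction, I would specialize to $I = \Kwd{1}$ and observe that $\FAM{\CCat}\Sub{\Kwd{1}}$ is canonically equivalent to $\CCat$. Under this equivalence $\pi\Sub{\Kwd{1},J}^*$ sends an object $H\in\CCat$ to the constant $J$-indexed family $\brc{H}$, and the adjunction $\pi\Sub{\Kwd{1},J}^* \dashv \Prod{\prn{\Kwd{1},J}}$ gives a natural bijection between morphisms $\Mor{H}{\Prod{\prn{\Kwd{1},J}}\brc{E_j}}$ in $\CCat$ and families $\Mor[h_j]{H}{E_j}$ indexed in $J$, which is precisely the universal property of a $J$-indexed product.

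The main obstacle here is not mathematical depth but notational bookkeeping: I must keep straight the three layers of data (objects in $\CCat$, families of objects, and morphisms of families with differently-shaped indexing bases) and name the Beck--Chevalley comparison as the canonical mate of the two adjunctions rather than as an ad hoc isomorphism. It is worth flagging that the backward direction uses only the case $I = \Kwd{1}$ of the hypothesis and makes no appeal to Beck--Chevalley at all; the full family of adjoints, together with Beck--Chevalley, is substantively used only in constructing and verifying the forward direction.
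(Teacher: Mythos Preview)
The paper does not actually prove this proposition; it is stated as a motivating result in the introduction and then used to segue into the general notion of a fibration, with no argument supplied. So there is nothing in the paper to compare your proposal against.

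That said, your argument is correct and is the standard one. One small nitpick: in the forward direction you say the Beck--Chevalley comparison ``is the identity''. This is true only after you have fixed a single global choice of $J$-indexed products in $\CCat$ and defined both $\Prod{\prn{I,J}}$ and $\Prod{\prn{K,J}}$ using that choice; otherwise the two sides are merely canonically isomorphic products of the same diagram, and you must check that the \emph{mate} of the identity square $\pi\Sub{K,J}^*\circ u^* = \prn{u\times\Idn{J}}^*\circ \pi\Sub{I,J}^*$ is that canonical isomorphism. Either way the conclusion holds, but it is worth being explicit that ``the identity'' presupposes a coherent choice. Your observation that the converse direction needs only the case $I=\Kwd{1}$ and no Beck--Chevalley is also correct and worth keeping.
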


The characterization of products in terms of the category of families may seem
more complicated, but it has a remarkable advantage: we can replace
$\Mor{\FAM{\CCat}}{\SET}$ with a different functor satisfying similar
properties in order to speak more generally of when one category has
``products'' that are parameterized in another category. The properties that
this functor has to satisfy for the notion to make sense are embodied in the
definition of a \emph{fibration} or \emph{fibered category}; a functor
$\Mor{\ECat}{\BCat}$ will be called a fibration when it behaves similarly to
the functor projecting the parameterizing object from a category of families of
objects. We begin with an auxiliary definition of \emph{cartesian morphism}:

\begin{definition}
  Let $\Mor[p]{\ECat}{\BCat}$ and let $\Mor{E}{F}$ be a morphism in $\ECat$, which we depict as follows:
  \[
    \DiagramSquare{
      nw = E,
      ne = F,
      sw = pE,
      se = pF,
      west/style = {|->},
      east/style = {|->},
      height = 1.5cm,
    }
  \]

  In the diagram above, we say that $\Mor{E}{F}$ \textbf{lies over}
  $\Mor{pE}{pF}$.
  We say that $\Mor{E}{F}$ is \textbf{cartesian} in $p$ when for any morphism
  $\Mor{H}{F}$ in $\ECat$ and $\Mor{pH}{pE}$ such that the former lies over the composite $pH\to pE\to pF$ in $\BCat$, there exists a unique
  morphism $\Mor{H}{E}$ lying over $\Mor{pH}{pE}$ such that $\Mor{H}{F}$ lies
  over the composite $\Mor{pH}{pE}$ as depicted below:
  \[
    \begin{tikzpicture}[diagram]
      \SpliceDiagramSquare<l/>{
        nw = H,
        sw = pH,
        se = pE,
        ne = E,
        west/style = {|->},
        east/style = {|->},
        height = 1.5cm,
        north/style = {exists,->},
        north/node/style = upright desc,
        north = \exists!,
      }
      \SpliceDiagramSquare<r/>{
        glue = west, glue target = l/,
        east/style = {|->},
        height = 1.5cm,
        ne = F,
        se = pF,
      }
      \draw[->,bend left=30] (l/nw) to (r/ne);
    \end{tikzpicture}
  \]
\end{definition}

\begin{remark}[Explication of cartesian maps]\label[remark]{rem:cartesian-maps}
  Returning to our example of the category of families $\FAM{\CCat}$ over
  $\SET$, we can make sense of the notion of a cartesian map. Given a function
  $\Mor[u]{J}{I}$ of indexing sets, the reindexing functor $u^*$ takes
  $I$-indexed families to $J$-indexed families. Given an $I$-indexed family
  $\brc{E_i}\Sub{i\in I}$, we may define a morphism $\Mor{u^*\brc{E_i}\Sub{i\in
  I}}{\brc{E_i}\Sub{i\in I}}$ in $\CCat$ whose first component is
  $\Mor[u]{J}{I}$ and whose second component is the identity function
  $\Mor{E\Sub{uj}}{E\Sub{uj}}$ at each $j\in J$.  The morphism
  $\Mor{u^*\brc{E_i}\Sub{i\in I}}{\brc{E_i}\Sub{i\in I}}$ is then
  \emph{cartesian} in $\Mor[p]{\FAM{\CCat}}{\SET}$.
\end{remark}

\begin{exercise}\label{exc:cart-lift-in-fam}
  Verify that the morphism $\Mor{u^*\brc{E_i}\Sub{i\in I}}{\brc{E_i}\Sub{i\in
  I}}$ constructed in \cref{rem:cartesian-maps} is indeed cartesian in
  $\Mor[p]{\FAM{\CCat}}{\SET}$.
\end{exercise}

In fact, we can strengthen \cref{exc:cart-lift-in-fam} to give an extrinsic
characterization of cartesian morphisms in $\FAM{\CCat}$: cartesian morphisms
are exactly the ``fiberwise isomorphisms''.

\begin{exercise}\label{exc:cart-map-in-fam}
  Let $E = \brc{E_i}\Sub{i\in I}$ and $F = \brc{F_j}\Sub{j\in J}$ be objects of
  $\FAM{\CCat}$, and let $\Mor[f]{E}{F}$ be a morphism
  between them. Show that the following are equivalent:
  \begin{enumerate}
    \item the morphism $\Mor[f]{E}{F}$ is cartesian;
    \item for each $i\in I$, the component $\Mor[f_i]{E_i}{F_{pfi}}$ is an isomorphism.
  \end{enumerate}
\end{exercise}

The name ``cartesian morphism'' is inspired by pullbacks, as we see in \cref{ex:cod-pullback} below.

\begin{exercise}\label[exercise]{ex:cod-pullback}
  Let $\BCat\Sup{\to}$ be the \emph{arrow category} of $\BCat$, whose objects
  are morphisms of $\BCat$ and whose morphisms are commuting squares between
  them; let $\Mor[\Con{cod}]{\BCat\Sup{\to}}{\BCat}$ be the \emph{codomain
  functor} that projects the codomain of a map $\Mor{A}{B}$. Show that a
  morphism $\Mor{E}{F}\in\BCat\Sup{\to}$ is cartesian if and only if the
  corresponding square in $\BCat$ is a pullback square (also called a cartesian square).
\end{exercise}

The relationship between cartesian morphisms and pullback squares expressed by
\cref{ex:cod-pullback} suggests a generalization of the conventional ``pullback
corners'' notation to an arbitrary fibration, which we shall use liberally.

\begin{convention}[Generalized ``pullback corners'']
  Let $\Mor[p]{\ECat}{\BCat}$ be a functor; when we wish to indicate that a
  morphism $\Mor{E}{F}$ in $\ECat$ is cartesian over $\Mor{pE}{pF}$, we will
  display it using a ``pullback corner'' notation as follows:
  \[
    \DiagramSquare{
      nw/style = pullback,
      nw = E,
      ne = F,
      sw = pE,
      se = pF,
      west/style = {|->},
      east/style = {|->},
      height = 1.5cm,
    }
  \]
\end{convention}

Finally we may give the definition of a fibration.

\begin{definition}
  A functor $\Mor[p]{\ECat}{\BCat}$ is called a \textbf{fibration} when for any
  object $E\in \ECat$ and morphism $\Mor{B}{pE} \in \BCat$ there exists a
  cartesian morphism $\Mor{H}{E}$ lying over $\Mor{B}{pE}$. The cartesian
  morphism is often called the \textbf{cartesian lift} of $\Mor{B}{pE}$ at $E$.
\end{definition}

\begin{convention}
  We will depict a fibration $\FibMor[p]{\ECat}{\BCat}$ using triangular
  arrows. When we wish to leave the functor implicit, we refer to $\ECat$ as a
  \textbf{fibered category} over $\BCat$. In the same way that one writes
  $A\times B$ for the apex of a product diagram, we will often write
  $\Mor[\bar{u}]{u^*E}{E}$ for the cartesian lift of $\Mor[u]{B}{pE}$ at $E$ as
  depicted below:
  \[
    \DiagramSquare{
      height = 1.5cm,
      west/style = lies over,
      east/style = lies over,
      ne = E,
      se = pE,
      sw = B,
      nw = u^*E,
      north = \bar{u},
      south = u,
      nw/style = pullback,
    }
  \]
\end{convention}

In the case of $\FAM{\ECat}$, the existence of cartesian lifts for each
$\Mor[u]{B}{pE}$ corresponds to the \emph{reindexing} functors
$\Mor[u^*]{\FAM{\CCat}\Sub{pE}}{\FAM{\CCat}\Sub{B}}$.

\begin{exercise}
  Verify that the functor $\FibMor{\FAM{\CCat}}{\SET}$ is a fibration.
\end{exercise}

\begin{exercise}
  Conclude from \cref{ex:cod-pullback} that the codomain functor
  $\Mor{\BCat\Sup{\to}}{\BCat}$ is a fibration if and only if $\BCat$ has all
  pullbacks.
\end{exercise}

When the codomain functor $\Mor{\BCat\Sup{\to}}{\BCat}$ is a fibration, we will
refer to it as the \emph{fundamental fibration}, written $\FFib{\BCat}$
following \citet{streicher:2021:fib}.

\subsubsection{Small categories, internal categories}

An ordinary category need not have a set of objects --- for instance, the
category $\SET$ of all sets has a \emph{proper class} of objects.\footnote{In
this paper, we are somewhat agnostic about set theoretic foundations. Our
discussion is compatible with the viewpoint of ZFC, in which classes are taken
to be formulas at the meta-level; our discussion is, however, also compatible
with other accounts of the ``set--class'' distinction, such as NBG set theory,
MK set theory, or the universe-based approaches of Grothendieck~\citep{sga:4}
and \citet{maclane:1998}.} Likewise, it is possible to find categories such
that between two objects there may be a proper class of morphisms (\eg the
category of sets and isomorphism classes of spans between them). A category
that has hom \emph{sets} is called \emph{locally small}, and we will refer to a
category that has, up to equivalence, a \emph{set} of objects as \emph{globally
small}. A category that has both these properties is equivalent to a
\emph{small category} in the ordinary sense.
Small categories are very useful: for instance, if $\mathbb{C}$ is a small
category then the category of functors $\brk{\mathbb{C},\SET}$ is a
Grothendieck topos.  Functor categories of this kind play an important role in
theoretical computer science~\citep[\eg][]{oles:1986,reynolds:1995,bmss:2011}.

The idea of a (globally, locally) small category can be relativized from $\SET$
to another category in two \emph{a priori} different ways that ultimately
coincide up to equivalence. The simplest and move na\"ive way to think of a
small category $\mathbb{C}$ in a category $\BCat$ is as an \emph{algebra} for
the sorts and operations of the \emph{theory of a category} internal to
$\BCat$, which we develop below; the more sophisticated way is to view
$\mathbb{C}$ as a fibration over $\BCat$ satisfying a generalization of the
global and local smallness conditions.

\begin{definition}
  Let $\BCat$ be a category that has pullbacks. An \textbf{internal category}
  or \textbf{category object} in $\BCat$ is given by:
  \begin{enumerate}
    \item an object $\mathbb{C}_0\in\BCat$ of \emph{objects},
    \item and an object $\mathbb{C}_1\in\BCat$ of \emph{morphisms},
    \item and source and target maps $\Mor[s,t]{\mathbb{C}_1}{\mathbb{C}_0}$,
    \item and a morphism $\Mor[i]{\mathbb{C}_0}{\mathbb{C}_1}$ choosing the identity maps, such that $s\circ i = \Idn{\mathbb{C}_0} = t\circ i$,
    \item and a morphism $\Mor[c]{\mathbb{C}_1\times\Sub{\mathbb{C}_0}\mathbb{C}_1}{\mathbb{C}_1}$ choosing composite maps such that $s\circ c = s\circ \pi_1$ and $t\circ c = t\circ \pi_2$,
    \item (plus associativity and unit laws for composition and identity)
  \end{enumerate}
\end{definition}

\begin{observation}
  When $\BCat = \SET$ we obtain exactly the ordinary notion of a small
  category, \ie a small category is the same thing as an
  \emph{internal category} or \emph{category object} in $\SET$.
\end{observation}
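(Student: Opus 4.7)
The proof is essentially by inspection, so the plan is to unpack the internal-category data in the case $\BCat = \SET$ and match it against the usual notion of a small category. The key observation is that pullbacks in $\SET$ are concrete: $\mathbb{C}_1 \times\Sub{\mathbb{C}_0} \mathbb{C}_1$ is the set of composable pairs $\brc{(f,g) \in \mathbb{C}_1 \times \mathbb{C}_1 \mid t(f) = s(g)}$, so the morphism $c$ specializes to a composition operation whose source/target behaviour is pinned down exactly by the equations $s\circ c = s\circ \pi_1$ and $t\circ c = t\circ \pi_2$.

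Given an internal category in $\SET$, I would manufacture a small category by taking the object set to be $\mathbb{C}_0$, the hom-set $\mathbb{C}(X,Y)$ to be the fiber $\brc{f \in \mathbb{C}_1 \mid s(f) = X,\ t(f) = Y}$, the identity at $X$ to be $i(X)$, and composition to be $g \circ f = c(f,g)$; well-typedness of these operations falls straight out of the source/target conditions, and the associativity and unit equations elided in the definition become the standard ones. For the converse, a small category yields internal data by setting $\mathbb{C}_1$ to be the disjoint union of hom-sets with $s,t$ the evident projections, $i$ the identity-assigning function, and $c$ the operation sending $(f,g)$ to $g \circ f$.

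The only remaining thing is to check that these two constructions are mutually inverse, which amounts to noting that a morphism in a small category is determined by its source, its target, and its underlying element of the disjoint union of hom-sets, while conversely every element of $\mathbb{C}_1$ sits in a uniquely determined fiber over $\mathbb{C}_0 \times \mathbb{C}_0$. No step poses a genuine obstacle --- the content of the observation is simply that the algebraic presentation of the theory of a category, when interpreted in $\SET$, delivers exactly the Eilenberg--Mac Lane notion of a small category.
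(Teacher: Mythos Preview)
Your proposal is correct and is exactly the unpacking one would carry out to verify the claim; the paper itself offers no proof, treating the observation as evident from the definition. Since there is no argument in the paper to compare against, your inspection-style verification is as close to the ``paper's approach'' as one can get.
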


In the previous section, we have argued that fibrations are a fruitful way to
think about categories defined relative to another category. Indeed, we may
view an internal category $\mathbb{C}$ as a fibration via a process called
\emph{externalization}. This proceeds in two steps; first we construct a
\emph{presheaf of categories} on $\BCat$, and then we use the
\emph{Grothendieck construction} to turn it into a fibration.

\begin{construction}[The presheaf of categories associated to an internal category]\label[construction]{con:icat-to-psh}
  Let $\mathbb{C}$ be an internal category in $\BCat$. We may define a presheaf
  of categories $\Mor[\mathbb{C}^\bullet]{\OpCat{\BCat}}{\CAT}$ like so:
  \begin{enumerate}
    \item for $I\in\BCat$, an object of $\mathbb{C}^I$ is given by a morphism $\Mor[\alpha]{I}{\mathbb{C}_0}$,
    \item for $I\in\BCat$, a morphism $\Mor{\alpha}{\beta}\in\mathbb{C}^I$ is given by a morphism $\Mor[h]{I}{\mathbb{C}_1}$ such that $s\circ h = \alpha$ and $t\circ h = \beta$,
    \item for $\Mor[u]{J}{I}$ in $\BCat$, the reindexing $\Mor[u^*]{\mathbb{C}^I}{\mathbb{C}^J}$ is given on both objects and morphisms by precomposition with $u$.
  \end{enumerate}
\end{construction}

\begin{construction}[The Grothendieck construction]
  Let $\Mor[\mathbb{C}^\bullet]{\OpCat{\BCat}}{\CAT}$ be a presheaf of categories; we define its
  \emph{total category} $\int\Sub{\BCat}\!\mathbb{C}^\bullet$ as follows:
  \begin{enumerate}
    \item an object of $\int\Sub{\BCat}\!\mathbb{C}^\bullet$ is given by a pair of an object $I\in\BCat$ and an object $c\in \mathbb{C}^I$,
    \item a morphism $\Mor{\prn{J,c}}{\prn{I,d}}$ is given by a pair of a morphism $\Mor[u]{J}{I}\in\BCat$ and a morphism $\Mor{c}{\mathbb{C}^ud}$ in $\mathbb{C}^J$.
  \end{enumerate}

  There is an evident functor
  $\Mor[p]{\int\Sub{\BCat}\!\mathbb{C}^\bullet}{\BCat}$; it is this functor
  that is referred to as the \emph{Grothendieck construction}.
\end{construction}

\begin{exercise}
  Verify that the Grothendieck construction of any presheaf of categories
  $\Mor[\mathbb{C}^\bullet]{\OpCat{\BCat}}{\CAT}$ is a fibration.
\end{exercise}

Observe that the presheaf of categories associated to an internal category is,
in each fiber $I$, the category object in $\SET$ obtained by restricting along
the functor $\Mor[I]{\Kwd{1}}{\BCat}$.

\begin{definition}[Externalization of an internal category]
  Let $\mathbb{C}$ be an internal category in $\BCat$; its
  \emph{externalization} is defined to be the Grothendieck construction
  $\FibMor{\brk{\mathbb{C}}\coloneqq
  \int\Sub{\BCat}\!\mathbb{C}^\bullet}{\BCat}$ of the associated presheaf of
  categories (\cref{con:icat-to-psh}).
\end{definition}

\begin{remark}
  When $\mathbb{C}$ is an internal category in $\SET$, the externalization $\brk{\mathbb{C}}$ is the \emph{family fibration} $\FAM{\mathbb{C}}$ described in \cref{con:family-fibration}.
\end{remark}

As promised we may now isolate the properties of the fibered category
$\brk{\mathbb{C}}$ that correspond (up to equivalence) to arising by
externalization from an internal category.

\begin{definition}\label[definition]{def:globally-small}
  A fibered category $\FibMor[p]{\ECat}{\BCat}$ is called \emph{globally small} if
  there is an object $T\in\ECat$ such that  for any $X\in\ECat$ there exists a
  (not necessarily unique) cartesian map $\Mor{X}{T}$.
\end{definition}

The property of global smallness described above is often phrased as
$\FibMor[p]{\ECat}{\BCat}$ having a ``generic object'' $T\in\ECat$, but the reason
for this paper's existence is that the precise definition of ``generic'' means
in this context is somewhat controversial, and the bulk of the present paper is
devoted to justifying the precise meaning for genericity that we have chosen
(in agreement with \cref{def:globally-small}). Note that our
\cref{def:globally-small} does not agree with that of \citet{jacobs:1999}, who
assumes additional (somewhat rare) properties of the object $T$, namely that
for $X\in\ECat$ there is exactly one morphism $\Mor{pX}{pT}$ lying underneath a
cartesian morphism $\Mor{X}{T}$. We will show \cref{sec:consequences:icat} that
ours is the correct definition.

\begin{definition}[\citet{benabou:1975,streicher:2021:fib}]\label[definition]{def:locally-small}
  A fibered category $\FibMor[p]{\ECat}{\BCat}$ is called \emph{locally small}
  when for any $I\in \BCat$ and $X,Y\in\ECat_I$ there exists a span
  $X\xleftarrow{f} H \xrightarrow{g} Y$ with $\Mor[f]{H}{X}$ cartesian over
  $\Mor[pg]{H}{Y}$ as depicted below,
  \[
    \begin{tikzpicture}[diagram]
      \SpliceDiagramSquare<l/>{
        height = 1.5cm,
        north/style = <-,
        south/style = <-,
        west/style = lies over,
        east/style = lies over,
        ne = H,
        se = pH,
        nw = X,
        sw = I,
        south = pg,
        north = f,
        ne/style = ne pullback,
      }
      \SpliceDiagramSquare<r/>{
        height = 1.5cm,
        glue = west,
        glue target = l/,
        east/style = lies over,
        ne = Y,
        se = I,
        north = g,
        south = pg,
      }
    \end{tikzpicture}
  \]
  such that for any other span $X\xleftarrow{f'} K \xrightarrow{g'} Y$ where
  $\Mor[f']{K}{X}$ is cartesian over $\Mor[pg']{pK}{pY}$,
  there is a unique map $\Mor{K}{H}$ making the the following diagram commute:
  \[
    \begin{tikzpicture}[diagram]
      \node (K) {$K$};
      \node (X) [below left = of K] {$X$};
      \node (Y) [below right = of K] {$Y$};
      \node (H) [below left = of Y] {$H$};
      \draw[->] (K) to node[sloped,above] {$f'$} (X);
      \draw[->] (K) to node[sloped,above] {$g'$} (Y);
      \draw[->] (H) to node[sloped,below] {$g$} (Y);
      \draw[->] (H) to node[sloped,below] {$f$} (X);
      \draw[exists,->] (K) to node [upright desc] {$\exists!$} (H);
    \end{tikzpicture}
  \]
\end{definition}

\begin{exercise}[Intermediate]
  Let $\CCat$ be an ordinary category; show that the fibered category
  $\FibMor{\FAM{\CCat}}{\SET}$ is locally small if and only if $\CCat$ is
  locally small. Show that $\CCat$ is equivalent to a small category if and
  only if $\FibMor{\FAM{\CCat}}{\SET}$ is both globally small and locally small.
\end{exercise}

One of the fundamental results of fibered category theory is that, up to
equivalence, global and local smallness in the sense of
\cref{def:globally-small,def:locally-small} suffice to detect internal
categories, which we recall from \citet{benabou:1975}:

\begin{proposition}[Th\'eor\`eme~2, \citet{benabou:1975}]\label[proposition]{prop:small-iff-globally-and-locally-small}
  A fibration $\FibMor{\ECat}{\BCat}$ is equivalent to the externalization of
  an internal category $\mathbb{E}$ if and only if it is both globally and locally small.
\end{proposition}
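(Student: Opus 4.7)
The plan is to prove the biconditional by construction in each direction, treating the easier direction first. For the \emph{if} direction, suppose $\ECat = \brk{\mathbb{C}}$ is the externalization of an internal category. For global smallness I would take $T \mathrel{\vcentcolon=} (\mathbb{C}_0, \Idn{\mathbb{C}_0})$; for any $(I, \alpha)$ the pair $(\alpha, \Idn{\alpha})$ is then a cartesian map to $T$. For local smallness at $(I, \alpha), (I, \beta) \in \brk{\mathbb{C}}_I$ I would form the pullback $H \mathrel{\vcentcolon=} I \times_{\mathbb{C}_0 \times \mathbb{C}_0} \mathbb{C}_1$ along $(\alpha, \beta)$ and $(s, t)$; the cartesian leg and the vertical map to $\beta$ are read off from the two pullback projections, and the universal property required by \cref{def:locally-small} is a direct translation of the universal property of the pullback, once one notes that morphisms $K \to H$ in $\ECat$ of the required span shape are precisely morphisms $pK \to \mathbb{C}_1$ with matching source and target.

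For the \emph{only if} direction, start from a fibration $\FibMor{p}{\ECat}{\BCat}$ satisfying both smallness conditions, and manufacture an internal category $\mathbb{E}$. Let $T \in \ECat$ witness global smallness and set $\mathbb{E}_0 \mathrel{\vcentcolon=} pT$. Working in the fiber $\ECat_{\mathbb{E}_0 \times \mathbb{E}_0}$ over $\mathbb{E}_0 \times \mathbb{E}_0$, apply local smallness to the pair $\pi_1^* T, \pi_2^* T$ to obtain a universal span $\pi_1^* T \leftarrow \mathrm{Hom} \to \pi_2^* T$; set $\mathbb{E}_1 \mathrel{\vcentcolon=} p\mathrm{Hom}$, with $(s, t) : \mathbb{E}_1 \to \mathbb{E}_0 \times \mathbb{E}_0$ the common base of the span. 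The identity $i : \mathbb{E}_0 \to \mathbb{E}_1$ is then extracted by applying the universal property of $\mathrm{Hom}$ to the identity span on $T$ over $\Delta : \mathbb{E}_0 \to \mathbb{E}_0 \times \mathbb{E}_0$; composition $c : \mathbb{E}_1 \times_{\mathbb{E}_0} \mathbb{E}_1 \to \mathbb{E}_1$ is produced by pulling back $\mathrm{Hom}$ twice over $\mathbb{E}_1 \times_{\mathbb{E}_0} \mathbb{E}_1$, composing the two resulting vertical maps between pulled-back copies of $T$, and invoking the universal property of $\mathrm{Hom}$ (indexed by the outer pair of projections). Associativity and the unit laws follow formally from the \emph{uniqueness} clause of the universal property: both sides of each axiom classify the same vertical map between reindexings of $T$.

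To finish, I would exhibit a fibered equivalence $F : \brk{\mathbb{E}} \to \ECat$ over $\BCat$ sending $(I, \alpha) \mapsto \alpha^* T$, with its action on morphisms determined by the bijection between vertical maps $\gamma^* T \to (\beta u)^* T$ in $\ECat_J$ and classifying morphisms $J \to \mathbb{E}_1$ provided by the universal property of $\mathrm{Hom}$. Essential surjectivity is immediate from global smallness, and full faithfulness reduces, after factoring morphisms through cartesian lifts, to that same universal property. The main obstacle I anticipate is bookkeeping rather than conceptual: because our version of global smallness (see the footnote to \cref{def:globally-small}) does \emph{not} make the base morphism of a cartesian map to $T$ unique, the action of $F$ on morphisms, the pseudofunctoriality of $F$, and even the naming of $c$ and $i$ have to be handled up to canonical isomorphism in each fiber rather than strictly. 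Being scrupulous about this non-uniqueness is exactly what the later parts of the paper are devoted to, so the proof should make it visible.
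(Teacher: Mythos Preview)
The paper does not actually prove this proposition: immediately after stating it, the text says ``Although we do not include the (standard) proof of \cref{prop:small-iff-globally-and-locally-small}\ldots'' and only pauses to identify the object $T=\prn{\mathbb{C}_0,\Idn{\mathbb{C}_0}}$ that witnesses global smallness in the externalization. So there is nothing to compare your argument against; your sketch is the standard proof and is structurally sound. One small slip: you have the labels of the two directions swapped --- showing that an externalization is globally and locally small is the \emph{only if} direction, and building $\mathbb{E}$ from a globally and locally small fibration is the \emph{if} direction.
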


Although we do not include the (standard) proof of
\cref{prop:small-iff-globally-and-locally-small}, it is instructive to
understand the object $T\in\brk{\mathbb{E}}$ in the externalization of an
internal category $\mathbb{E}$ that renders $\brk{\mathbb{E}}$ globally small.
Recalling the definition of the externalization via the Grothendieck
construction, we define the \textbf{weak generic} object $T$ to be the pair
$\prn{\mathbb{E}_0, \Mor[\Idn{\mathbb{E}_0}]{\mathbb{E}_0}{\mathbb{E}_0}}$
given by the object of objects and its identity map.

\begin{example}\label{ex:weak-generic-fam}
  When $\mathbb{E}$ is a
  small category, then the \textbf{weak generic} object of
  $\brk{\mathbb{E}}=\FAM{\mathbb{E}}$ can be written as the family
  $\brc{E}{\Sub{E\in \mathbb{E}_0}}$.
\end{example}

\subsubsection{Cleavages and splittings}

We briefly recall the definitions of cleavages and splittings for a fibration,
as they are relevant to the rest of this paper.

\begin{definition}
  A \emph{cleavage} for a fibration $\FibMor[p]{\ECat}{\BCat}$ is a choice
  $\mathfrak{r}$ of cartesian liftings, assigning to each morphism $\Mor[u]{I}{pX}\in
  \BCat$ in the base an object $\mathfrak{r}_u X\in \ECat$ over $I$ and a cartesian
  morphism $\Mor[\overline{\mathfrak{r}_u}]{\mathfrak{r}_u X}{pX}$ over $u$.
\end{definition}

The data of a cleavage $\mathfrak{r}$ extends for each $\Mor[u]{I}{J}\in \BCat$
to a reindexing functor $\Mor[\mathfrak{r}_u]{\ECat\Sup{J}}{\ECat\Sup{I}}$.

\begin{definition}
  A cleavage $\mathfrak{r}$ for a fibration $\FibMor[p]{\ECat}{\BCat}$ is
  called \emph{split} when the assignment of reindexing functors $u\mapsto
  \mathfrak{r}_u$ strictly preserves identities and compositions.
\end{definition}

\begin{convention}
  When the cleavage $\mathfrak{r}$ is understood, we will often write $u^*$ for
  $\mathfrak{r}_u$.
\end{convention}

A fibration equipped with a cleavage is called a \emph{cloven fibration}; we
may use the axiom of choice to equip any fibration with a (non-canonical)
cleavage.  When the cleavage associated to a cloven fibration is split, we
speak of \emph{split fibrations}.
Observe that a splitting allows one to view a fibration $\FibMor{\ECat}{\BCat}$
as a presheaf of categories $\Mor{\OpCat{\BCat}}{\CAT}$ sending each
$I\in\BCat$ to the fiber $\ECat^I$. Recalling \cref{con:icat-to-psh}, we see
that the externalization of any internal category is split.

\subsection{Goals and structure of this paper}

Although the property stated in \cref{def:globally-small}, that there exist
cartesian morphisms $\Mor{X}{T}$ for any $X$, is the most that can be required
for an arbitrary internal category, more restrictive notions of generic object
have proved important in practice for different applications. Unfortunately,
over the years a number of competing definitions have proliferated throughout
the literature --- and some of the more established of these definitions lead
to false conclusions when taken too literally, as we point out in
\cref{sec:consequences:icat} in our discussion of Jacobs' mistaken
Corollary~9.5.6.

The goal of this paper, therefore, is to argue for a new alignment of
terminology for the different forms of generic object that is both internally
consistent \emph{and} reflects the use of generic objects in practice. Because
generic objects play an important role in several areas of application
(categorical logic, algebraic set theory, homotopy type theory, denotational
semantics of polymorphism, \etc), we believe that we have sufficient evidence
today to correctly draw the map.

\begin{itemize}

  \item In \cref{sec:kinds-of-generic-object}, we recall the definitions of
    several variants of generic object by \citet{jacobs:1999}; our main
    observation is that a \textbf{split generic} object in the sense of \opcit
    need not be a \textbf{generic} object in the same sense.

  \item In \cref{sec:consequences}, we analyze the consequences of the
    definitions discussed in \cref{sec:kinds-of-generic-object} for the use of
    generic objects in internal category theory, tripos theory, denotational
    semantics of polymorphism, algebraic set theory, and homotopy type theory.

  \item In \cref{sec:proposal}, we propose new unified terminology and
    definitions for all extant forms of generic object (as well as one
    \emph{new} one). Our proposal is summarized and compared with the
    literature in \cref{tab:rosetta-stone}.

\end{itemize}

\section{Four kinds of generic object}\label{sec:kinds-of-generic-object}

We begin by recalling Definition~5.2.8 of \citet{jacobs:1999}, from which we
omit some additional characterizations that will not play a role in our analysis.
\begin{quote}
  \itshape
  Consider a fibration $\FibMor[p]{\ECat}{\BCat}$ and an object $T$ in the total category $\ECat$. We call $T$ a
  \begin{enumerate}[i)]
    \item \textbf{weak generic} object if
      $
        \forall X\in\ECat. \exists \Mor[f]{X}{T}. f\text{ is cartesian}
      $.

    \item \textbf{generic} object if
      $
        \forall X\in\ECat. \exists! \Mor[u]{pX}{pT}. \exists \Mor[f]{X}{T}. f\text{ is cartesian over $u$}
      $.

    \item \textbf{strong generic} object if
      $
        \forall X\in\ECat. \exists! \Mor[f]{X}{T}. f\text{ is cartesian}
      $.
  \end{enumerate}
\end{quote}

\citet{jacobs:1999} then defines \textbf{split generic} objects for split
fibrations in Definition~5.2.1, paraphrased below:
\begin{quote}
  A split fibration $\FibMor[p]{\ECat}{\BCat}$ has a \textbf{split generic
  object} if there is an object $\Omega\in\BCat$ together with natural
  isomorphism $\Mor[\theta]{\BCat\prn{-,\Omega}}{\Ob\,{\ECat_\bullet}}$ in
  $\brk{\OpCat{\BCat},\SET}$, where the presheaf $\Ob\,\ECat_\bullet$ is
  defined using the splitting.
\end{quote}

A useful characterization of \textbf{split generic} objects is given in
Lemma~5.2.2 of \opcit:
\begin{quote}
  \itshape
  A split fibration $\FibMor[p]{\ECat}{\BCat}$ has a \textbf{split generic} object if and only if there is an object $T\in\ECat$ with the property that $\forall X\in \ECat.\exists!\Mor[u]{pX}{pT}. u^*T = X$.~\citep{jacobs:1999}
\end{quote}

\begin{scholium}
  The \textbf{weak} and \textbf{strong generic objects} of \citet{jacobs:1999}
  are referred to by \citet{phoa:1992} as \emph{generic objects} and
  \emph{skeletal generic objects}. \citet{phoa:1992} does not consider the
  intermediate notion.
  On the other hand, \citet{phoa:1992} defines \emph{strict generic objects}
  relative to an arbitrary (non-split) cleavage; a \textbf{split generic
  object} is indeed a \emph{strict generic object} in the sense of Phoa, but
  even for a split cleavage, a \emph{strict generic object} need not be a
  \textbf{split generic} object. We will discuss Phoa's terminology more in
  \cref{schol:skeletal}.
\end{scholium}

\subsection{Separating generic objects from strong generic objects}

\citet{jacobs:1999} notes that generic and strong generic objects coincide in
fibered preorders, but they may differ otherwise --- the difference emanating
from the presence of non-trivial vertical automorphisms.

\NewDocumentCommand\Sk{m}{\Kwd{sk}\,#1}

\begin{example}\label{ex:delooping}
  Let $G$ be a group containing two distinct elements $u\not= v$, and let
  $\Deloop{G}$ be the groupoid with a single object whose hom set is $G$
  itself. The family fibration $\FibMor{\FAM{\Deloop{G}}}{\SET}$ is
  a fibered category whose objects are just sets, but such that a morphism
  $\Mor[f]{I}{J}\in \FAM{\Deloop{G}}$ is a pair $\prn{f,x}$ of a
  function $\Mor[f]{I}{J}\in \SET$ together with a generalized element
  $\Mor[x]{I}{G}$. Moreover, every morphism in $\FAM{\Deloop{G}}$
  is cartesian as $\Deloop{G}$ is a groupoid. The unique
  object $T\in \FAM{\Deloop{G}}\Sub{1\Sub{\SET}}$ is clearly
  \textbf{generic}, but not \textbf{strong generic}. Indeed, we have two
  distinct cartesian morphisms $\Mor{T}{T}$ given by the pairs
  $\prn{\Idn{1\Sub{\SET}},u}\not=\prn{\Idn{1\Sub{\SET}},v}$.
\end{example}

\begin{example}\label{ex:skeleton}
  Another class of examples comes from considering skeleta of full subcategories
  of $\SET$. For instance, one may take the skeleton of a Grothendieck universe
  and then externalize to obtain a fibration that has a \textbf{generic} object
  that is not \textbf{strong}.
\end{example}

\subsection{A split generic object need not be a generic object}\label{sec:split-generic-not-generic}

\begin{construction}[The canonical splitting of
  the externalization]\label[construction]{con:canonical-splitting}
  The externalization of an internal category is split in a canonical way:
  given $\prn{I,c}\in \brk{\mathbb{C}}$ and $\Mor[u]{J}{I}$, we choose
  $u^*\prn{I,c} = \prn{J, c\circ u}$. The cartesian morphism
  $\Mor{u^*\prn{I,c}}{\prn{I,c}}$ is given by the pair $\prn{u, \Idn{} \circ
  c\circ u}$ where $\Mor[\Idn{}]{\mathbb{C}_0}{\mathbb{C}_1}$ is the generic
  identity morphism.
\end{construction}

\begin{construction}[Weak, split generic objects in the externalization]\label[construction]{con:externalization-weak-generic-object}
  The externalization $\FibMor[p]{\brk{\mathbb{C}}}{\BCat}$ has a \textbf{weak
  generic object} $T = \prn{\mathbb{C}_0, \Idn{\mathbb{C}_0}}$. Relative to the
  splitting of $\brk{\mathbb{C}}$ from \cref{con:canonical-splitting}, the
  \textbf{weak generic} object $T$ is also a \textbf{split generic} object.
\end{construction}

The \textbf{weak generic} object of the externalization of an internal category
defined in \cref{con:externalization-weak-generic-object} obviously need not be
a \textbf{strong generic} object, but it may be more surprising to learn that
it also need not be a \textbf{generic} object at all. This can happen, for
instance, when the internal category $\mathbb{C}$ has two distinct isomorphic
objects; the following concrete example illustrates the problem:

\begin{example}[A split generic object that is not a generic object]
  Let $U$ be a set of sets containing two distinct elements $A,B$ with the same
  cardinality, and $\SET_U\subseteq\SET$ be the full subcategory of $\SET$
  spanned by $U$.  Then the family fibration
  $\FibMor{\FAM{\SET_U}}{\SET}$ has a \textbf{split generic} object
  $T$ given by the pair $T=\prn{U, \Idn{U}}$, but $T$ is nonetheless not a
  \textbf{generic} object. Indeed, we have two cartesian morphisms
  $\Mor{\prn{1\Sub{\SET}, A}}{T}$ lying over distinct elements
  $\Mor[A\not=B]{1\Sub{\SET}}{U}$ respectively.
\end{example}

\begin{corollary}
  A split generic object is not necessarily a generic object.
\end{corollary}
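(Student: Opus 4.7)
The plan is to extract the two claims implicit in the immediately preceding example and verify each in turn. First, I would show that $T = \prn{U, \Idn{U}}$ is a \textbf{split generic object} of $\FibMor{\mathbf{Fam}\prn{\SET_U}}{\SET}$ by applying the characterization from Lemma~5.2.2 of \citet{jacobs:1999}: every object of $\mathbf{Fam}\prn{\SET_U}$ is an $I$-indexed family $\brc{E_i\in U}\Sub{i\in I}$, and the function $u:I\to U$ defined by $u\prn{i}=E_i$ is the unique map for which the strict equality $u^*T=X$ holds under the canonical splitting of \cref{con:canonical-splitting}. Uniqueness here is genuine because $U$ is by construction a \emph{set of sets}: distinct elements of $U$ are literally distinct, not merely isomorphic.

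Second, I would exhibit the failure of the \textbf{generic object} condition on the very same $T$. Take $X=\prn{1\Sub{\SET},A}$. A morphism $\Mor{\prn{1\Sub{\SET},A}}{T}$ in $\mathbf{Fam}\prn{\SET_U}$ is a pair of a function $\Mor[v]{1\Sub{\SET}}{U}$ and a map $\Mor{A}{v\prn{\ast}}$ in $\SET_U$, and this morphism is cartesian precisely when the second component is a bijection (since $\SET_U$ is a full subcategory of $\SET$). Taking $v=A$ with the identity $\Mor{A}{A}$ yields a cartesian lift of $X$ to $T$ over $A$; taking $v=B$ together with any bijection $\Mor{A}{B}$ — which exists by the hypothesis $\Abs{A}=\Abs{B}$ — yields another cartesian lift of $X$ to $T$ over $B\neq A$. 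Thus two distinct base morphisms carry cartesian lifts, violating the uniqueness clause in the definition of \textbf{generic object}.

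The only real obstacle is making sure the two definitions are being parsed correctly against each other: the \textbf{split generic object} condition demands uniqueness of $u$ with $u^*T=X$ holding \emph{on the nose} under the splitting, whereas the \textbf{generic object} condition demands uniqueness of a $u$ that merely admits \emph{some} cartesian lift over it. The example lives precisely in the gap between these quantifiers — nontrivial vertical isomorphisms in the fibre (here, bijections between distinct elements of $U$ of equal cardinality) suffice to break the latter uniqueness while leaving the former intact, because the canonical splitting of the externalization tracks strict equality of the classifying map.
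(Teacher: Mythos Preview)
Your proposal is correct and follows exactly the approach of the paper: the corollary is an immediate consequence of the preceding example, and you have simply unpacked the two claims of that example with the appropriate level of detail. Your verification that $u\prn{i}=E_i$ is the unique base map with $u^*T=X$ under the canonical splitting, and your explicit construction of two cartesian lifts of $\prn{1\Sub{\SET},A}$ over the distinct points $A\not=B$, are precisely what the paper asserts more tersely.
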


\subsection{Skeletal and gaunt small categories}

We recall from \cref{ex:weak-generic-fam} that for a small category $\mathbb{C}$,
the family fibration $\FAM{\mathbb{C}}$ has a \textbf{weak generic} object $T =
\brc{C}\Sub{C\in\mathbb{C}_0}$. We will relate properties of the category
$\mathbb{C}$ to corresponding properties of the \textbf{weak generic} object
$T$.

\begin{definition}\label[definition]{def:skeletal}
  A category is called \emph{skeletal} when any to isomorphic objects are equal.
\end{definition}

\begin{definition}\label[definition]{def:gaunt}
  A category is called \emph{gaunt} when any isomorphism in that category is the identity.
\end{definition}

\begin{scholium}
  The term \emph{gaunt} is used by
  \citet{barwick-schommer-pries:2011,nlab:gaunt-category,hottbook}. In passing,
  \citet{johnstone:2002} has referred to such categories as \emph{stiff}.
\end{scholium}

\begin{lemma}\label[lemma]{lem:gaunt-skeletal-generic-object-comparison}
  The category $\mathbb{C}$ is (respectively skeletal, gaunt) if and only if
  $T$ is (respectively \textbf{generic}, \textbf{strong generic}).
\end{lemma}

\begin{proof}
  (1a) If $\mathbb{C}$ is skeletal, then $T$ is \textbf{generic}; fix any
  family $\brc{D_i}\Sub{i\in I}$ and cartesian map
  $\Mor[\chi]{\brc{D_i}\Sub{i\in I}}{\brc{C}{\Sub{C\in\mathbb{C}}}}$.
  By \cref{exc:cart-map-in-fam}, $p\chi$ sends each $i\in I$ to an
  object $p\chi_i\in\mathbb{C}$ that is isomorphic to $D_i$; as
  $\mathbb{C}$ is skeletal, it follows that $p\chi_i = D_i$ and so we
  conclude that $T$ is \textbf{generic}.

  (1b) Assume conversely that $T$ is \textbf{generic} and fix an
  isomorphism $f:D_0\cong D_1$ in $\mathbb{C}$. We have two cartesian maps
  $\Mor[h_0,h_1]{\brc{D_0}}{T}$, with one lying over
  $\Mor[D_0]{\brc{*}}{\mathbb{C}_0}$ via the identity morphism and the
  other lying over $\Mor[D_1]{\brc{*}}{\mathbb{C}_0}$ via $f$. Since $T$
  is \textbf{generic}, these two cartesian maps must lie over the same
  element of $\mathbb{C}_0$, so we have $D_0=D_1$.

  (2a) If $\mathbb{C}$ is gaunt, then $T$ is \textbf{strong generic}; fix any
  two cartesian morphisms $\Mor[h_0,h_1]{\brc{D_i}\Sub{i\in
  I}}{\brc{C}\Sub{C\in\mathbb{C}}}$. Because $\mathbb{C}$ is gaunt and thus
  skeletal, we know that $ph_0 = ph_1$; thus $h_0$ assigns to each $i\in I$ an
  isomorphism $\Mor[h_{0,i}]{D_i}{ph_1i}$ which is (by assumption) necessarily
  the identity. Thus both $h_0$ and $h_1$ must send every $i\in I$ to the
  identity map on $D_i$ and are thus equal.

  (2b) Conversely we assume that $T$ is \textbf{strong generic} to check that any
  isomorphism in $\mathbb{C}$ is an identity map; since $T$ is
  necessarily also \textbf{generic}, it follows by the first case of the
  present lemma that we may consider just the automorphisms in
  $\mathbb{C}$, considering
  \cref{lem:gaunt-iff-ess-gaunt-and-skeletal}. To show that any
  automorphism $f:D\cong D$ in $\mathbb{C}$ is the identity morphism,
  we proceed exactly as in the proof of
  \cref{lem:strong-generic-essentially-gaunt} by observing that the two
  cartesian maps corresponding to the identity map and the automorphism
  $f$ respectively are necessarily equal by our assumption that $T$ is
  \textbf{strong generic}.
\end{proof}

\subsection{Generic objects from weak generic objects}\label{sec:gen-from-split-gen}

Although the externalization of an internal category necessarily has a (split)
\textbf{weak generic} object, in some cases it
may also have a \textbf{generic} object $T'$ that embodies the
\emph{skeleton} of $\mathbb{C}$ as in \cref{ex:skeleton}, but $T'$ is usually
different from the $T$.

\begin{construction}[Computing the skeleton of a small category]
  Suppose that $\BCat = \SET$ and thus $\mathbb{C}$ is an ordinary small
  category. Then we may consider the quotient $\mathbb{C}_0/\cong$ of the
  objects of $\mathbb{C}$ under isomorphism; in other words, this is the set of
  isomorphism classes of $\mathbb{C}$-objects. Using the axiom of choice, we
  may arbitrarily choose a section $\Mor[s]{\mathbb{C}_0/\cong}{\mathbb{C}_0}$
  to the quotient map; moreover, we may choose a function associating to each
  $u\in\mathbb{C}_0$ an isomorphism $u\cong s\brk{u}\Sub{/\cong}$.
\end{construction}

\begin{lemma}
  The pair $T' = \prn{\mathbb{C}_0/\cong, s}$ is a \textbf{generic} object for $\FAM{\mathbb{C}}$.
\end{lemma}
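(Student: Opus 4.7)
The plan is to unfold the definition of generic object directly in the family fibration, after first pinning down what the cartesian morphisms into a family look like. By \cref{rem:cartesian-maps} (and a short additional check), a morphism $(\bar u, u)\colon \{F_j\}\Sub{j\in J}\to \{E_i\}\Sub{i\in I}$ in $\mathbf{Fam}\prn{\mathbb{C}}$ is cartesian over $u$ if and only if each component $\Mor[\bar u_j]{F_j}{E\Sub{uj}}$ is an isomorphism: the canonical cleavage witnesses this for the identity case, and an arbitrary cartesian morphism can be checked to factor identities through its components to force each $\bar u_j$ to have a two-sided inverse.

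Fix then an arbitrary object $X = \prn{I,c}\in\mathbf{Fam}\prn{\mathbb{C}}$, presented under the Grothendieck construction as a function $\Mor[c]{I}{\mathbb{C}_0}$. I would show that there is a unique $\Mor[u]{I}{\mathbb{C}_0/\cong}$ admitting a cartesian morphism $\Mor{X}{T'}$ over $u$. For existence, define $u(i) = \brk{c(i)}\Sub{/\cong}$. Then $s\prn{u(i)}$ is by construction a chosen representative of the isomorphism class of $c(i)$, so the chosen isomorphisms $c(i)\cong s\brk{c(i)}\Sub{/\cong}$ assemble into a morphism $\Mor{X}{T'}$ in $\mathbf{Fam}\prn{\mathbb{C}}$ that is cartesian over $u$ by the characterization above.

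For uniqueness, suppose $\Mor[u']{I}{\mathbb{C}_0/\cong}$ also admits a cartesian morphism $\Mor{X}{T'}$ over $u'$. By the characterization, this forces $c(i)\cong s\prn{u'(i)}$ in $\mathbb{C}$ for each $i\in I$, and therefore $s\prn{u'(i)}\cong s\prn{u(i)}$. Applying the quotient map $\Mor{\mathbb{C}_0}{\mathbb{C}_0/\cong}$ and using that $s$ is a \emph{section} of this quotient, we conclude $u'(i) = u(i)$ for every $i$, hence $u' = u$.

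I do not expect any real obstacle; the only subtle step is the identification of cartesian morphisms in $\mathbf{Fam}\prn{\mathbb{C}}$ with the componentwise-iso morphisms, and this is exactly where the passage from \textbf{weak} to genuine generic object happens: the freedom to pick a non-identity isomorphism $c(i)\cong s(u(i))$ is what allows every family to map cartesianly into $T'$, while the sectionhood of $s$ is precisely what rigidifies the base map $u$ and supplies uniqueness.
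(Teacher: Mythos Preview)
Your proof is correct and follows essentially the same approach as the paper's: define $u(i)=\brk{c(i)}\Sub{/\cong}$, use the chosen isomorphisms for existence, and for uniqueness observe that a cartesian lift over $u'$ forces $c(i)\cong s(u'(i))$, hence $u'(i)=\brk{c(i)}\Sub{/\cong}=u(i)$. You are simply more explicit than the paper in two places---the characterization of cartesian morphisms in $\mathbf{Fam}\prn{\mathbb{C}}$ as componentwise isomorphisms, and the use of the section identity $q\circ s=\Idn{}$ to conclude uniqueness---whereas the paper compresses these into the phrase ``$v$ and $u$ are the same family of isomorphism classes.''
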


\begin{proof}
  Fixing $\prn{I,c}\in\FAM{\mathbb{C}}$ we must choose a unique $\Mor[u]{I}{pT'}$
  such that there exists a cartesian map $\Mor{\prn{I,c}}{T}$ lying over $u$. We
  choose $u\prn{i} = \brk{c\prn{i}}\Sub{/\cong}$, taking each index $i\in I$ to
  the isomorphism class of $c\prn{i}$.

  \begin{enumerate}

    \item First of all, it is clear that there exists a cartesian map lying over
      $u$ in the correct configuration.

    \item Fixing $\Mor[v]{I}{pT'}$ such that there exists a cartesian map
      $\Mor{\prn{I,c}}{T}$ lying over $v$, it remains to show that $v = u$. This
      follows because such a cartesian map ensures that $v$ and $u$ are the same
      family of isomorphism classes of objects.
      \qedhere

  \end{enumerate}
\end{proof}

\begin{lemma}
  If the $T'$ defined above is a \textbf{split generic} object, then $\mathbb{C}$ is skeletal.
\end{lemma}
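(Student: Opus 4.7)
The plan is to apply the split-generic-object characterization at singleton families and exploit the strictness of the equation $u^*T' = X$. Under the canonical splitting of $\mathbf{Fam}\prn{\mathbb{C}}$ (\cref{con:canonical-splitting}), the reindexing of $T' = \prn{\mathbb{C}_0/{\cong}, s}$ along $\Mor[u]{I}{\mathbb{C}_0/{\cong}}$ is the family $\prn{I, s\circ u}$, so the split-generic-object condition demands that for every $\prn{I,c}\in\mathbf{Fam}\prn{\mathbb{C}}$ there is a unique $u$ with $s\circ u = c$ \emph{on the nose}, not merely up to isomorphism.

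Fixing an arbitrary object $a\in\mathbb{C}_0$, I would instantiate the property at the singleton family $\prn{1, a}$, where $a$ is viewed as a morphism $\Mor{1}{\mathbb{C}_0}$. This supplies some class $u\prn{\star}\in\mathbb{C}_0/{\cong}$ satisfying $s\prn{u\prn{\star}} = a$. Since $s$ is a section of the quotient map, the element $s\prn{u\prn{\star}}$ lies in the isomorphism class $u\prn{\star}$; therefore $a\in u\prn{\star}$, forcing $u\prn{\star} = \brk{a}\Sub{/{\cong}}$ and yielding the pointwise identity $s\prn{\brk{a}\Sub{/{\cong}}} = a$ for every $a\in\mathbb{C}_0$.

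Skeletality then falls out immediately: if $a\cong b$ in $\mathbb{C}$, then $\brk{a}\Sub{/{\cong}} = \brk{b}\Sub{/{\cong}}$, and applying the identity just established on each side gives $a = s\prn{\brk{a}\Sub{/{\cong}}} = s\prn{\brk{b}\Sub{/{\cong}}} = b$. The main thing worth noting is that only the existence half of the split-generic-object property is used; it is the strictness of $u^*T' = X$ that forces the chosen representative to literally equal $a$, which is precisely the strengthening absent from the previous lemma's up-to-cartesian-map treatment and hence the pressure point that rules out non-skeletal $\mathbb{C}$.
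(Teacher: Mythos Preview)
Your argument is correct, but it differs from the paper's. The paper's proof is a one-liner: since the canonical $T=\prn{\mathbb{C}_0,\Idn{\mathbb{C}_0}}$ is already known to be a \textbf{split generic object} (\cref{con:externalization-weak-generic-object}), and split generic objects are unique up to unique isomorphism (they represent the presheaf $\Ob\,\ECat_\bullet$), the assumption that $T'$ is also split generic forces $T\cong T'$; unwinding this isomorphism in the base makes the quotient map $\mathbb{C}_0\to\mathbb{C}_0/{\cong}$ a bijection, hence $\mathbb{C}$ is skeletal. Your approach instead instantiates Lemma~5.2.2 directly at singleton families to obtain $s\prn{\brk{a}\Sub{/{\cong}}}=a$ pointwise. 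The paper's route is more conceptual and explains \emph{why} the lemma should hold (two representing objects must coincide), though it leaves the final step to the reader; your route is more explicit, self-contained, and makes a sharper observation---only the existence half of the split-generic-object property is actually invoked, since uniqueness of $u$ is automatic from $s$ being a section.
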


\begin{proof}
  We have already seen that $T$ is a \textbf{split generic} object, hence if
  $T'$ is also \textbf{split generic} we have isomorphisms ${pT}\cong
  \mathbb{C}_0 \cong pT'$ and so we have $\mathbb{C}_0\cong
  \mathbb{C}_0/\cong$.
\end{proof}

We finally observe in \cref{lem:strong-generic-essentially-gaunt} that if $T'$
is \textbf{strong generic}, then $\mathbb{C}$ is \emph{essentially gaunt} in
the sense defined below.

\begin{definition}\label[definition]{def:ess-gaunt}
  A category is called \emph{essentially gaunt} when it is equivalent to a
  gaunt category.
\end{definition}

Simon Henry has made the following observation:\footnote{Comment on the
MathOverflow thread entitled \emph{Name for `Category without nontrivial
automorphisms'?}, \url{https://mathoverflow.net/q/370764}.}

\begin{observation}
  A category is essentially gaunt if and only if any \emph{automorphism} in that category is the identity.
\end{observation}

\begin{lemma}\label[lemma]{lem:gaunt-iff-ess-gaunt-and-skeletal}
  A category is gaunt if and only if it is skeletal and essentially gaunt.
\end{lemma}

\begin{proof}
  If $\mathbb{C}$ be gaunt, it is obviously both essentially gaunt and
  skeletal.  Conversely, if $\mathbb{C}$ is essentially gaunt and skeletal,
  given any isomorphism $f:D\cong C$ we have $D=C$ and this $f$ is an
  automorphism, which (by gauntness) is the identity.
\end{proof}

\begin{lemma}\label[lemma]{lem:strong-generic-essentially-gaunt}
  If $T'$ as defined above is a \textbf{strong generic} object, then
  $\mathbb{C}$ is essentially gaunt in the sense of \cref{def:ess-gaunt}.
\end{lemma}

\begin{proof}
  Let $\Mor[f]{c}{c}$ be an automorphism in $\mathbb{C}$, \ie a vertical
  isomorphism in $\FAM{\mathbb{C}}\Sub{1\Sub{\SET}}$. Since $T'$ is \textbf{strong
  generic}, there exists a unique cartesian morphism
  $\Mor{\prn{1\Sub{\SET},c}}{T'}$; this means that there is a unique element
  $\brk{c}\in \mathbb{C}_0/\cong$ and a unique isomorphism
  $\Mor[h]{c}{s\brk{c}}$. Writing $\phi_c : c\cong s\brk{c}$ for the (globally)
  chosen isomorphism, we have $f;\phi_c = h = \phi_c$ and hence $f = \Idn{c}$,
  so $\mathbb{C}$ is essentially gaunt.
\end{proof}

Later on in \cref{sec:proposal}, we will see that the original \textbf{weak
generic} object $T$ being \textbf{strong generic} corresponds to $\mathbb{C}$
being gaunt.

Thus we conclude that although the family fibration $\FAM{\mathbb{C}}$ over
$\SET$ of a small category $\mathbb{C}$ does have a generic object, this
generic object cannot be either a \textbf{split generic} object or a
\textbf{strong generic} object except in somewhat contrived scenarios.

\subsection{Weak generic objects are the correct generalization of split generic objects}\label{sec:weak-and-split}

It is clear that any split generic object is in particular a weak generic object; but the
converse \emph{also} holds in a certain sense that we make precise below.

\begin{construction}[Presheaf of categories]\label{con:pr-of-cat}
  Let $\FibMor[p]{\ECat}{\BCat}$ be a fibered category equipped with a cleavage $\mathfrak{r}$, and let $T\in\ECat$ be a
  \textbf{weak generic} object for $T$. We may construct a presheaf of categories
  $\Mor[\ECat^\bullet]{\OpCat{\BCat}}{\CAT}$ like so:
  \begin{enumerate}
    \item an object of $\ECat^I$ is a morphism $\Mor[\alpha]{I}{pT}$,
    \item a morphism $\Mor{\alpha}{\beta}$ in $\ECat^I$ is a vertical map $\Mor[h]{\mathfrak{r}_\alpha T}{\mathfrak{r}_\beta T}$ over $I$.
  \end{enumerate}
\end{construction}

\begin{construction}
  Let $\Mor[\mathbb{E}]{\OpCat{\BCat}}{\CAT}$ be a presheaf of categories; then
  the Grothendieck construction $\FibMor[q]{\int\!\mathbb{E}}{\BCat}$ has a
  splitting.  Given $\prn{J, \alpha}\in \int\!\ECat$ and $\Mor[u]{I}{J}$, the
  object $u^*\prn{J,\alpha}$ is chosen to be $\prn{I, \alpha\circ u}$ and the
  cartesian morphism $\Mor{u^*\prn{J,\alpha}}{\prn{J,\alpha}}$ over $u$ in
  $\int\!\ECat^\bullet$ is defined to be the pair $\prn{u, \Idn{u^*\alpha}}$.
\end{construction}

\begin{lemma}
  There is a fibered equivalence from $\int\!\ECat^\bullet$ to $\ECat$ over
  $\BCat$.
\end{lemma}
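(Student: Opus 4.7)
The plan is to construct an explicit fibered functor $F : \int\!\ECat^\bullet \to \ECat$ and show it is an equivalence by verifying that each fiber functor is an equivalence of ordinary categories and that $F$ preserves cartesian morphisms.

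To define $F$, I would use the axiom of choice to pick, for every morphism $\alpha : I \to pT$ in $\BCat$, a cartesian lift $\bar\alpha : A_\alpha \to T$ over $\alpha$, which exists because $T$ is a weak generic object. On objects, set $F(I,\alpha) = A_\alpha$. On a morphism $(u, [\bar\alpha', \overline{\beta\circ u}, h]) : (I,\alpha) \to (J,\beta)$, use the fact that the composite $\bar\beta \circ \bar u : u^*A_\beta \to T$, where $\bar u : u^*A_\beta \to A_\beta$ is a cartesian lift of $u$ at $A_\beta$, is itself cartesian over $\beta \circ u$, hence uniquely vertically isomorphic to any representative $\overline{\beta\circ u}$. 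Transport $h$ along the unique vertical isos between $\bar\alpha'$ and $\bar\alpha$ and between $\overline{\beta\circ u}$ and $\bar\beta \circ \bar u$ to obtain a vertical map $\tilde h : A_\alpha \to u^*A_\beta$, and define $F(u,[\bar\alpha', \overline{\beta\circ u}, h]) = \bar u \circ \tilde h$. The quotient built into the definition of $\ECat^I$ makes this assignment independent of the choice of representative.

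Next I would verify the three required properties. \textbf{Functoriality and fiberedness} follow directly from the universal property of cartesian lifts; in particular $F$ commutes with $p$ by construction, and it carries the canonical split cartesian morphisms of $\int\!\ECat^\bullet$ (namely pairs of the form $(u, (\bar\alpha,\bar\alpha,\Idn{}))$) to chosen cartesian lifts of $u$ at $A_\alpha$ in $\ECat$, which are cartesian. \textbf{Essential surjectivity} on each fiber $\ECat_I$: given $X\in\ECat_I$, pick any cartesian map $\bar\alpha_X : X \to T$ using the weak generic property; then $A_{p\bar\alpha_X}$ and $X$ are both cartesian domains over $p\bar\alpha_X$ in the fiber over $I$, so they are uniquely vertically isomorphic. \textbf{Full faithfulness} on each fiber: a morphism $\alpha \to \beta$ in $\ECat^I$ is precisely a vertical map between \emph{any} cartesian lifts of $\alpha$ and $\beta$, taken modulo the unique vertical isomorphisms between different choices of lift; normalizing to the chosen lifts $A_\alpha$ and $A_\beta$ exhibits $\ECat^I(\alpha,\beta)$ in canonical bijection with $\ECat_I(A_\alpha, A_\beta)$.

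The main obstacle is purely bookkeeping: one must juggle the axiom-of-choice data (the chosen $A_\alpha$), arbitrary representatives of the quotient, and the canonical splitting of $\int\!\ECat^\bullet$, so that functoriality and cartesian-preservation both reduce to the uniqueness half of the universal property of cartesian morphisms. Once this is set up carefully, every equation needed is forced by that uniqueness, and no nontrivial computation remains.
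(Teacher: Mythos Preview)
The paper states this lemma without proof, so there is no argument to compare against. Your proposal is correct and is exactly the standard way one would establish this equivalence: choose cartesian lifts once and for all, define the fibered functor accordingly, and then verify fiberwise essential surjectivity and full faithfulness using the universal property of cartesian morphisms together with the defining quotient in the hom-sets of $\ECat^I$.

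One small imprecision: when you say the cartesian lift $\bar\alpha : A_\alpha \to T$ over $\alpha$ ``exists because $T$ is a weak generic object,'' that is not quite the right justification. The existence of a cartesian lift over a given base morphism $\alpha : I \to pT$ is the \emph{fibration} property of $p$; the weak generic object property is what you (correctly) invoke later for essential surjectivity, where you need, for an arbitrary $X \in \ECat_I$, some cartesian map $X \to T$. This does not affect the argument, but it is worth stating the two hypotheses in the right places.
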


\begin{lemma}
  The pair $T' = \prn{T, \Idn{T}}$ is a \textbf{split generic} object in
  $\FibMor{\int\!\ECat^\bullet}{\BCat}$.
\end{lemma}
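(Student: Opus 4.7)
The plan is to apply the characterization recalled from Lemma~5.2.2 of \opcit just above the construction: $T'$ is a \textbf{split generic object} in the split fibration $\FibMor[q]{\int\!\ECat^\bullet}{\BCat}$ precisely when for every $X\in\int\!\ECat^\bullet$ there is a unique base morphism $\Mor[u]{qX}{qT'}$ in $\BCat$ such that $u^*T' = X$ on the nose, where $u^*$ refers to the canonical splitting defined just before the statement.

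First, I would clean up one notational abuse. An object of $\int\!\ECat^\bullet$ has its first coordinate in $\BCat$, so the pair written $T' = \prn{T, \Idn{T}}$ must be read as $T' = \prn{pT, \Idn{pT}}$, where $\Mor[\Idn{pT}]{pT}{pT}$ is viewed as an object of $\ECat^{pT}$. In particular, $qT' = pT$.

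Second, I would unfold the canonical splitting at $T'$. Given an arbitrary $X = \prn{J, \alpha}$ with $\Mor[\alpha]{J}{pT}$ and any candidate $\Mor[u]{J}{pT}$, the definition yields
\[
  u^*T' = u^*\prn{pT, \Idn{pT}} = \prn{J, \Idn{pT}\circ u} = \prn{J, u}.
\]
The strict equation $u^*T' = X$ therefore reduces to $\prn{J, u} = \prn{J, \alpha}$, which has the unique solution $u = \alpha$. This simultaneously supplies existence and uniqueness and completes the verification.

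The only point that one might worry about — but which I do not expect to be a genuine obstacle — is whether the reindexing really lands on $X$ strictly, as opposed to up to a vertical isomorphism of cartesian lifts. This is automatic: the canonical splitting was set up so that $u^*\prn{J,\beta} = \prn{I, \beta\circ u}$ on the nose, exploiting the strictness of composition in $\BCat$, so no further coherence argument about the quotienting in the definition of morphisms of $\ECat^\bullet$ is needed here.
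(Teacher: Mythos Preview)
Your proof is correct and follows essentially the same route as the paper's own argument: both unfold the splitting to see that $u^*T' = \prn{J,u}$, so the equation $u^*T' = \prn{J,\alpha}$ has the unique solution $u = \alpha$. Your explicit acknowledgment that $T' = \prn{T,\Idn{T}}$ should be read as $\prn{pT,\Idn{pT}}$ is a welcome clarification of a notational abuse that the paper itself perpetuates (writing $\Mor[\alpha]{I}{T}$ and $\Mor[u]{I}{T}$ for what are really morphisms into $pT$).
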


\begin{proof}
  Fixing $X\in \int\!\ECat^\bullet$, we must check that there exists a unique
  $\Mor[u]{pX}{pT'}$ such that $u^*T' = X$. Unfolding definitions, we fix $I\in
  \BCat$ and $\Mor[\alpha]{I}{T}$ to check that there is a unique
  $\Mor[u]{I}{T}$ such that $\prn{I, u} = \prn{I,\alpha}$. Of course, this is true with $u=\alpha$.
\end{proof}

Thus a \textbf{weak generic} object for a fibration $\FibMor{\ECat}{\BCat}$
generates in a canonical way a new \emph{equivalent} \textbf{split} fibration
$\FibMor{\ECat'}{\BCat}$ that has a \textbf{split generic object}. This is the
correspondence between \textbf{weak generic} and \textbf{split generic}
objects.
\section{Consequences for internal categories, tripos theory, polymorphism, \etc}\label{sec:consequences}

We recall several definitions from \citet{jacobs:1999} below in order to
illustrate a pattern.
\begin{enumerate}
    \itshape

  \item A \textbf{higher order fibration} is a first order
    fibration\footnote{We do not recall the definition of first order
    fibrations here, as the specifics are not important for the present paper.} with a
    \textbf{generic} object and a cartesian closed base category. Such a higher
    order fibration will be called \textbf{split} if the fibration is split and
    all of its fibered structure (including the \textbf{generic} object) is
    split.~\citep[Definition~5.3.1]{jacobs:1999}

  \item A \textbf{tripos} is a higher order fibration $\FibMor{\ECat}{\SET}$
    over $\SET$ for which the induced products $\Prod{u}$ and coproducts
    $\Coprod{u}$ along an arbitrary function $u$ satisfy the Beck-Chevalley
    condition.~\citep[Definition~5.3.3]{jacobs:1999}

  \item A \textbf{polymorphic fibration} is a fibration with a generic object,
    with fibered finite products and with finite products in its base category.
    It will be called \textbf{split} whenever all this structure is
    split.~\citep[Definition~8.4.1]{jacobs:1999}.

  \item A \textbf{polymorphic fibration} with $\Omega$ in the base as a generic object will be called
    \begin{enumerate}[(a)]
      \item a $\lambda{\to}$-fibration if it has fibered exponents;
      \item a $\lambda2$-fibration if it has fibered exponents and also simple $\Omega$-products and coproducts;
      \item a $\lambda\omega$-fibration if it has fibered exponents, simple products and coproducts, and exponents in its base category.~\citep[Definition~8.4.3]{jacobs:1999}
    \end{enumerate}

  \item Let $\Diamond$ be ${\to},2,\omega$. A $\lambda\Diamond$-fibration will be
    called \textbf{split} if all of its structure is split. In particular, its
    underlying polymorphic fibration is
    split.~\citep[Definition~8.4.4]{jacobs:1999}

\end{enumerate}

A \textbf{split generic} object need not be a \textbf{generic} object as we
have seen in \cref{sec:split-generic-not-generic}, and indeed, is only quite
rarely one. Consequently, a \textbf{split polymorphic fibration}
is not the same thing as a split fibration with a \textbf{split generic} object
and split fibered finite products.  This disorder suggests that a change of
definitions is in order, which we propose in \cref{sec:proposal}.

\subsection{Consequences for internal category theory}\label{sec:consequences:icat}

The somewhat chaotic status of \textbf{generic} objects \emph{vis-\`a-vis}
\textbf{split generic} objects has led to an erroneous claim by
\citet{jacobs:1999} that the externalization of an internal category has a
\textbf{generic} object. What is actually proved by \opcit in Lemma~7.3.2 is
that the externalization of an internal category has a \textbf{split generic} object, but this is later claimed (erroneously) to give rise to a
\textbf{generic} object in the proof of Corollary~9.5.6. Thus the claimed
result of Corollary~9.5.6, that a fibration is small (equivalent to the
externalization of an internal category) if and only if it is globally small
(has a \textbf{generic} object) and locally small, is mistaken. To give a
correct definition in the terminology of \opcit, one would define globally
small categories to assume a \textbf{weak generic} object, as we have done in
\cref{def:globally-small}.

\subsection{Consequences for tripos theory}

\citet{jacobs:1999} sketches in Example~5.3.4 the standard construction of a
tripos from a partial combinatory algebra, referring to
\citet{hyland-johnstone-pitts:1980} for several parts of the construction
(including the \textbf{generic} object); the standard definition of a tripos
involves a \textbf{weak generic} object
only~\citep{hyland-johnstone-pitts:1980,pitts:1981:thesis,pitts:2002}, a
discrepancy that has already been noted by \citet[p.~110]{birkedal:2000}.

\nocite{van-oosten:2008} This difference is quite destructive, as the
\textbf{weak generic} object of a realizability tripos will not generally be
\textbf{generic}, as pointed out by \citet[Theorem~4.6]{streicher:2017-2018}.
Thus we conclude that definition of tripos proposed by \citet{jacobs:1999}
rules out the main examples of naturally occurring triposes.

\begin{remark}
  A final subtlety: over the category of assemblies, the fibration of
  \emph{regular} subobjects does indeed have a \textbf{strong generic} object,
  but this is not part of the structure of the tripos.
\end{remark}

\subsection{Consequences for polymorphism}

The reception in the community studying polymorphism has been to either avoid
or tacitly correct the definition of generic object. For instance,
\citet{hermida:1993} speaks of \textbf{weak generic} objects and \textbf{strong
generic objects}, and gives the correct definition of
$\lambda{\to},\lambda2,\lambda\omega$ fibration in terms of \textbf{weak
generic objects}. On the other hand
\citet{birkedal-mogelberg-petersen:2005,johann-sojakova:2017,sojakova-johann:2018,
ghani-forsberg-orsanigo:2019} deal mainly with the \textbf{split generic
objects}, and thus do not seem to run into problems.
It can be seen, however, that the examples of (split)
$\lambda\Diamond$-fibrations in the cited works are \emph{not} in fact (split)
$\lambda\Diamond$-fibrations in the sense of \citet{jacobs:1999}, because they
do not have \textbf{generic} objects. Of course, the problem lies with the
definitions rather than the examples.

\subsection{Consequences for type theory and algebraic set theory}\label{sec:consequences:tt}

The idea of a \emph{universe} in a category has been abstracted from
Grothendieck's universes~\citep{sga:4} by way of the contributions of a number
of authors including
\citet{benabou:1973,martin-lof:itt:1975,street:1980:cosmoi,joyal-moerdijk:1995,hofmann-streicher:1997,streicher:2005}.
In fibered categorical language, a universe usually is a \emph{full
subfibration} that has a \textbf{weak generic} object --- potentially equipped
with additional structure.

In certain cases, the \textbf{weak generic} object of a universe is also a
\textbf{generic} object that is nonetheless not \textbf{strong}; one example is
Voevodsky's construction of a universe of well-ordered simplicial sets in the
context of the simplicial model of homotopy type theory, reported by
\citet{kapulkin-lumsdaine:2021}: it is \textbf{strong generic} for the fibered
category of well-ordered families of simplicial sets and order-preserving
morphisms, but only \textbf{generic} when considering morphisms that need not
preserve the well-orderings. On the other hand, universes of propositions, such as
the subobject classifier of a topos or the regular subobject classifier of a
quasitopos, are the main source of \textbf{strong generic objects} in nature.

The theory of universes as applied to \emph{type theory} on the one hand and
\emph{algebraic set theory} on the other hand motivates two additional variants
of generic object:

\begin{enumerate}

  \item In applications to type theory, it has been increasingly important for
    universes to have a \textbf{weak generic} object $T$ that satisfies an
    additional \emph{realignment} property relative to a class of monomorphisms
    $\mathcal{M}$; in particular, given a span of cartesian maps
    $U\leftarrowtail X \to T$ where $p\prn{U\leftarrowtail X}\in\mathcal{M}$,
    we need an extension $U\to T$ factoring $X\to T$ through $U\rightarrowtail
    X$. This realignment property has proved essential for the semantics of
    univalent universes in homotopy type
    theory~\citep{kapulkin-lumsdaine:2021,shulman:2015:elegant,shulman:2019,gratzer-shulman-sterling:2022:universes}
    as well as Sterling's \emph{synthetic Tait
    computability}~\citep{sterling:2021:thesis}, an abstraction of Artin gluing
    and logical relations.
    In \cref{sec:acyclic} we will discuss the generalization of the realignment
    property to an arbitrary fibered category as the \ul{acyclic generic}
    object, which lies strictly between \textbf{weak generic} and
    \textbf{generic} objects.

  \item In algebraic set theory~\citep{joyal-moerdijk:1995}, universes are considered that enjoy a form of
    generic object that is even weaker than \textbf{weak generic} object. For
    each $E\in\ECat$ one only ``locally'' has a morphism into $T$; the ``very
    weak'' generic objects of algebraic set theory seem to relate to
    \textbf{weak generic} objects in the same way that weak completeness
    relates to strong completeness in the context of
    stack completions~\citep{hyland:1988,hyland-robinson-rosolini:1990}, with important
    applications to the theory of polymorphism. As these very weak generic
    objects seem to play a fundamental role, we discuss them in more detail in
    \cref{sec:stacks} as part of our proposal for a new alignment of terminology.

\end{enumerate}
\section{A proposal for a new alignment of terminology}\label{sec:proposal}

Based on the data and experience of the applications of fibered categories in
the theory of triposes, polymorphism, and universes in type theory and
algebraic set theory, we may now proceed with some confidence to propose a new
alignment of terminology for generic objects that better reflects fibered
categorical practice. In this section, we distinguish our proposed usage from
that of other authors by \ul{underlining}.

We will define several notions of generic object in conceptual order rather
than in order of strength; in \cref{tab:rosetta-stone} we summarize our
terminology, and compare it to several representatives from the literature. In
\cref{fig:strength} we compare the strength of the different kinds of generic
object.

\begin{table}
  \begin{adjustbox}{center}
    \begin{tabular}{lllll}
      \toprule
      \textbf{Our Proposal} & \textbf{\citet{jacobs:1999}} & \textbf{\citet{phoa:1992}} & \textbf{\citet{hermida:1993}} & \textbf{\citet{streicher:2005}}\\
      \midrule
      \ul{weak generic} & --- & --- & --- & weak generic \\
      \ul{generic} & \textbf{weak generic} & generic & generic & generic\\
      \ul{acyclic generic} & --- & --- & --- & --- \\
      \ul{skeletal generic} & \textbf{generic} & --- & --- & --- \\
      \ul{gaunt generic} & \textbf{strong generic} & skeletal generic & strong generic & classifying \\
      \bottomrule
    \end{tabular}
  \end{adjustbox}

  \caption{A Rosetta stone for the different terminologies for generic objects.}
  \label{tab:rosetta-stone}
\end{table}

\begin{figure}
  \vspace{1em}
  \begin{center}
    \tikzset{box/.append style = {draw=black}, arrow/.append style = {line width=2pt, {stealth}-}}
    \begin{tikzpicture}
      \node[box] (weak) {\ul{weak generic}};
      \node[box,left=of weak] (plain) {\ul{generic}};
      \node[box,above left=of plain] (acyclic) {\ul{acyclic generic}};
      \node[box,below left=of plain] (skeletal) {\ul{skeletal generic}};
      \node[box,below left=of acyclic] (gaunt) {\ul{gaunt generic}};
      \draw[arrow] (weak) to (plain);
      \draw[arrow] (plain) to (acyclic);
      \draw[arrow] (plain) to (skeletal);
      \draw[arrow] (acyclic) to (gaunt);
      \draw[arrow] (skeletal) to (gaunt);
    \end{tikzpicture}
  \end{center}
  \vspace{1em}
  \caption{An analysis of the comparability of different notions of generic object, where the rightward direction represents (strictly) decreasing strength.}
  \label{fig:strength}
\end{figure}
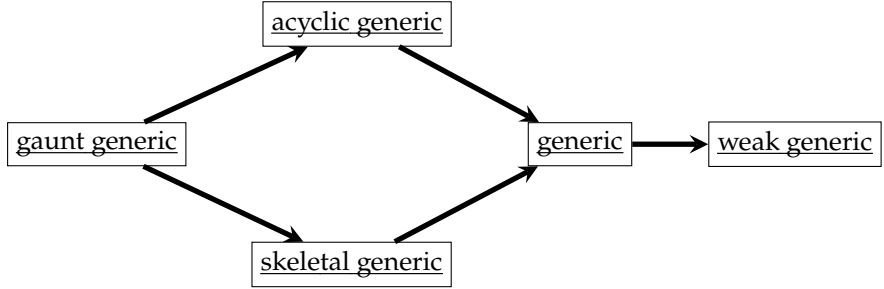

\subsection{Generic objects: ordinary, skeletal, and gaunt}

Let $\FibMor[p]{\ECat}{\BCat}$ be a fibered category. Our most basic definition below
corresponds to the \textbf{weak generic} objects of \citet{jacobs:1999}.

\begin{definition}
  A \ul{generic} object for $\FibMor[p]{\ECat}{\BCat}$ is defined to be
  an object $T\in\ECat$ such that for any $X\in \ECat$ there exists a cartesian
  morphism $\Mor{X}{T}$.
\end{definition}

\begin{definition}\label[definition]{def:generic-object:skeletal}
  A \ul{generic} object $T\in\ECat$ is called \ul{skeletal} when
  for any $X\in\ECat$, there exists a unique $\Mor{pX}{pT}$ such that there
  exists a cartesian map $\Mor{X}{T}$ lying over $\Mor{pX}{pT}$.
\end{definition}

\begin{definition}\label[definition]{def:generic-object:gaunt}
  A \ul{generic} object $T\in\ECat$ is called \ul{gaunt} when for
  any $X\in\ECat$, any two cartesian morphisms $\Mor{X}{T}$ are equal.
\end{definition}

\begin{scholium}\label[scholium]{schol:skeletal}
  Our \ul{skeletal} and \ul{gaunt} generic objects are the same as Jacobs'
  \textbf{generic} objects and \textbf{strong generic objects} respectively.
  Our terminology is inspired by a comparison between the properties of an
  internal category $\mathbb{C}$ and its externalization: in particular, the
  generic object over $\mathbb{C}_0$ is \ul{skeletal} when $\mathbb{C}$ is a
  skeletal category and it is \ul{gaunt} when $\mathbb{C}$ is a gaunt category.
  Unfortunately, \citet{phoa:1992} has used the word ``skeletal'' to describe
  what we call \ul{gaunt} generic objects; but this usage accords with \opcit's
  unconventional definition of a \emph{skeletal category}: usually a skeletal
  category is one in which any two isomorphic objects are equal, but Phoa
  defines it to be one in which the only isomorphisms are identity maps. Thus
  Phoa's skeletal categories would ordinarily be referred to as \emph{gaunt}
  categories.
\end{scholium}

To illustrate the comparison between (skeletal, gaunt) categories and
(\ul{skeletal}, \ul{gaunt}) generic objects respectively, we recall
\cref{lem:gaunt-skeletal-generic-object-comparison}, which is restated in the
new terminology as follows:

\begin{corollary}\label[lemma]{cor:gaunt-skeletal-generic-object-comparison:restated}
  Let $\mathbb{C}$ be a small category; then $\mathbb{C}$ is (respectively
  skeletal, gaunt) if and only if the \ul{generic} object
  $T=\brc{C}\Sub{C\in\mathbb{C}_0}$ for $\FAM{\mathbb{C}}$ is (respectively
  \ul{skeletal}, \ul{gaunt}).
\end{corollary}

\subsection{Weak generic objects and stack completions}\label{sec:stacks}

In this section, we are concerned with \emph{stacks} --- a 2-dimensional
variation on the concept of a sheaf.

\begin{definition}
  Suppose that a pretopos $\BCat$ is equipped with the structure of a site.
  Then a fibration $\FibMor{\ECat}{\BCat}$ is a stack when for any object $X\in
  \BCat$ and covering sieve $S\subseteq \BCat\downarrow X$, the canonical
  restriction functor
  $\Mor{\ECat_X}{\hom\Sub{\Kwd{Fib}\Sub{\BCat}}\prn{S,\ECat}}$ is an
  equivalence.\footnote{For more on stacks, see The Stacks
  Project~\citep{stacks-project} or the tutorial of \citet{vistoli:2004}.}
\end{definition}

Our examples will mainly concern stacks on a pretopos $\BCat$ equipped with its
regular topology.

\begin{definition}[{\citet[Definition~2.3.5]{frey:2013}}]
  A cartesian morphism $\Mor{X}{Y}$ in a fibered category
  $\FibMor[p]{\ECat}{\BCat}$ is called \emph{cover-cartesian} when
  $\Mor|->>|{pX}{pY}$ is a regular epimorphism.
\end{definition}

\begin{definition}\label[definition]{def:weak-generic}
  A \ul{weak generic} object for $\FibMor[p]{\ECat}{\BCat}$ is defined
  to be an object $T\in\ECat$ such that for all $X\in\ECat$ there exists a
  span of cartesian maps $X\leftarrow\tilde{X}\to T$ where
  $\Mor{\tilde{X}}{X}$ is cover-cartesian.
\end{definition}

\begin{scholium}
  Our definition and terminology agrees with that of
  \citet[(5.2)]{streicher:2005} and
  \citet[Definition~6.3.7]{battenfeld:2008:thesis}, differing only in that both
  of the cited works specialize the definition for B\'enabou-definable full
  subfibrations of the fundamental fibration of a topos, \ie full subfibrations
  that are stacks. Our definition is also identical to the \emph{representability
  axiom} (S2) for classes of small maps in algebraic set
  theory~\citep[Definition~1.1]{joyal-moerdijk:1995}.
\end{scholium}

\begin{remark}
  The notion of \ul{weak generic} object defined above should be thought of as an
  \emph{internal} version of the property of being a \ul{generic} object ---
  internal to a fibration that is a stack for the regular topology. Indeed, our
  explicit definition ought to be the translation of ordinary genericity
  through a generalization of Shulman's stack semantics~\citep{shulman:2010}
  that is stated for stacks other than the fundamental fibration
  $\FibMor{\FFib{\BCat} = \BCat^\to}{\BCat}$. The same (informal) translation is
  used by \citet{hyland-robinson-rosolini:1990} to correctly define \emph{weak
  equivalences} and \emph{weak completeness} for categories fibered over a
  regular category.
\end{remark}

\ul{Weak generic} objects in the sense of \cref{def:weak-generic} arise
very naturally.

\nocite{bunge:1979}
\begin{example}\label[example]{ex:stack-completion}
  Let $\Mor[\pi]{E}{U}$ be a morphism in a topos
  $\BCat$; then the class of maps arising as pullbacks of $\Mor[\pi]{E}{U}$
  determines a full subfibration $\brk{\pi}\subseteq\FFib{\BCat}$ of the
  fundamental fibration $\FibMor{\FFib{\BCat}=\BCat^\to}{\BCat}$ for which
  $\Mor[\pi]{E}{U}$ is a \ul{generic} object.
  The \emph{stack completion} of $\brk{\pi}$ is a weakly (but not necessarily
  strongly) equivalent full subfibration $\brc{\pi}\subseteq \FFib{\BCat}$, and
  it can be computed like so: an object of $\brc{\pi}\Sub{I}$ is a morphism
  $\Mor{X}{I}$ such that there exists a regular epimorphism
  $\Mor|->>|{\tilde{I}}{I}$ such that the pullback
  $\Mor{X\times\Sub{I}\tilde{I}}{\tilde{I}}$ lies in
  $\brk{\pi}\Sub{\tilde{I}}$. In other words, we have the pullback squares in
  the following configuration:
  \[
    \begin{tikzpicture}[diagram]
      \SpliceDiagramSquare<l/>{
        north/style = {<<-},
        south/style = {<<-},
        ne/style = {pullback, ne pullback},
        nw = X,
        sw = I,
        ne = X\times\Sub{I}\tilde{I},
        se = \tilde{I},
        height = 1.5cm,
      }
      \SpliceDiagramSquare<r/>{
        glue = west, glue target = l/,
        ne = E,
        se = U,
        height = 1.5cm,
      }
    \end{tikzpicture}
  \]

  Unless $\brk{\pi}$ was already a stack, it is not necessarily the case that
  $\Mor{E}{U}$ is a \ul{generic} object for the stack completion
  $\brc{\pi}$. But \citet{streicher:2005} points out that $\Mor{E}{U}$ is
  nevertheless a \ul{weak generic} object for $\brc{\pi}$, essentially
  by definition.
\end{example}

Scenarios of the form described in \cref{ex:stack-completion} are easy to come
by; a canonical example is furnished by the modest objects of a realizability
topos as described by \citet{hyland-robinson-rosolini:1990}, with critical
implications for the denotational semantics of polymorphism.

\begin{example}[\citet{hyland-robinson-rosolini:1990}]\label[example]{ex:eff-pers}
  Let $\Kwd{Eff}$ be the \emph{effective topos}~\citep{hyland:1982}, and let
  $\Con{N}$ be its object of realizers, \ie the partitioned assembly given by
  $\mathbb{N}$ such that $n\Vdash m \in \Con{N} \Leftrightarrow m = n$. Write
  $\Con{P}\subseteq \Omega\Sup{\Con{N}\times\Con{N}}$ for the the object of
  $\lnot\lnot$-closed partial equivalence relations on $\Con{N}$ and
  $\Con{P}'\in \Kwd{Eff}\downarrow \Con{P}$ for the \emph{generic
  $\lnot\lnot$-closed subquotient} of $\Con{N}$; internally, this is the
  subquotient $R:\Con{P}\vdash \Compr{x:\Con{N}}{x\mathrel{R}x}/R$.

  The family $\Mor[\pi]{\Con{P}'}{\Con{P}}$ then induces a full subfibration
  $\brk{\pi}\subseteq\FFib{\Kwd{Eff}}$ spanned by morphisms that arise by
  pullback from $\pi$; an element $\brk{\pi}_I$ is an object at stage $I$ that
  is the subquotient of $\Con{N}$ by some partial equivalence defined at stage
  $I$. The fibration $\FibMor{\brk{\pi}}{\Kwd{Eff}}$ is small with
  \ul{generic} object $\pi$, but it is \emph{not} complete.  Although
  for every $\Kwd{Eff}$-indexed diagram $\Mor{\mathbb{C}}{\brk{\pi}}$ there
  ``exists'' in the internal sense a limit, we cannot globally choose this
  limit. This situation is referred to by \citet{hyland-robinson-rosolini:1990}
  as \emph{weak completeness} as opposed to \emph{strong completeness}.

  In contrast, we may consider the stack completion $\brc{\pi}$ of $\brk{\pi}$.
  An element of $\brc{\pi}_I$ is given by an object $E$ at stage $I$ such that
  there ``exists'' (in the internal sense) a partial equivalence relation that
  it is the quotient of --- externally, this means that there is a regular
  epimorphism $\Mor{\tilde{I}}{I}$ such that $\tilde{I}^*E$ is the subquotient
  of $\Con{N}$ by some partial equivalence relation defined at stage
  $\tilde{I}$. The stack $\brc{\pi}$ is weakly but not strongly equivalent to
  $\brk{\pi}$; on the other hand, $\brc{\pi}$ is complete in the strong sense.
  Finally, we observe that $\Mor[\pi]{\Con{P}'}{\Con{P}}$ is a \ul{weak
  generic} object for the stack completion $\brc{\pi}$.

  If we pull back $\brk{\pi}$ to along the inclusion
  $\EmbMor[i]{\Kwd{Asm}}{\Kwd{Eff}}$ of assemblies / $\lnot\lnot$-separated
  objects in $\Kwd{Eff}$, then we have a \emph{complete} fibered category
  $i^*\brk{\pi}$ over $\Kwd{Asm}$ which turns out to be (strongly) equivalent
  to the familiar fibration $\FibMor{\Kwd{UFam}\prn{\Kwd{PER}}}{\Kwd{Asm}}$ of
  uniform families of PERs over assemblies. In fact, $i^*\brk{\pi}$ is strongly
  equivalent to $i^*\brc{\pi}$ --- in other words, over assemblies
  there is no difference between an object that is locally isomorphic to a
  subquotient of $\Con{N}$ and an actual subquotient of $\Con{N}$.
\end{example}

There is another way to think of the situation described at the end of
\cref{ex:eff-pers}, using the observations of \citet{streicher:2005} on the
relationship between B\'enabou's notion of \emph{definable class} and
stackhood: the class of families of subquotients of $\Con{N}$ is not definable
in $\Kwd{Eff}$, but it is definable in $\Kwd{Asm}$.

\subsection{A new class: \texorpdfstring{\ul{acyclic}}{acyclic} generic objects}\label{sec:acyclic}

Inspired by the crucial \emph{realignment} property of type theoretic
universes~\citep{gratzer-shulman-sterling:2022:universes} that was discussed in \cref{sec:consequences:tt}(1), we define a new kind
of generic object for an arbitrary fibration that restricts in the case of a
full subfibration to a universe satisfying realignment. We refer to
\citet{gratzer-shulman-sterling:2022:universes} for an explanation of the
applications of realignment.
In this section, let $\FibMor[p]{\ECat}{\BCat}$ be a fibration and let
$\mathcal{M}$ be a class of monomorphisms in $\BCat$.

\begin{definition}
  A \ul{generic} object $T$ for $\FibMor[p]{\ECat}{\BCat}$ is called
  \emph{$\mathcal{M}$-\ul{acyclic}} when for any span of cartesian maps
  in $\ECat$ as depicted below in which $\Mor|>->|{pU}{pX}$ lies in $\mathcal{M}$,
  \[
    \begin{tikzpicture}[diagram]
      \node (U) {$U$};
      \node (T) [right = of U] {$T$};
      \node (X) [below = of U] {$X$};
      \draw[->] (U) to node [above] {cart.} (T);
      \draw[->] (U) to node [sloped,below] {cart.} (X);
      \node[color=gray] (pU) [right = 2cm of T] {$pU$};
      \node[color=gray] (pX) [below = of pU] {$pX$};
      \draw[color=gray,>->] (pU) to node [right] {$\in \mathcal{M}$} (pX);
    \end{tikzpicture}
  \]
  there exists a cartesian map $\Mor{X}{T}$ making the following diagram commute in $\ECat$:
  \[
    \begin{tikzpicture}[diagram]
      \node (U) {$U$};
      \node (T) [right = of U] {$T$};
      \node (X) [below = of U] {$X$};
      \draw[->] (U) to node [above] {cart.} (T);
      \draw[->] (U) to node [sloped,below] {cart.} (X);
      \draw[exists,->] (X) to node [sloped,below] {$\exists$ cart.} (T);
    \end{tikzpicture}
  \]
\end{definition}

\begin{convention}
  When $\mathcal{M}$ is understood to be the class of \emph{all} monomorphisms in
  $\BCat$, we will simply speak of \ul{acyclic generic} objects.
\end{convention}

All of the examples of \ul{acyclic generic} objects that we are aware of so
far have arisen in the context of the full subfibration spanned by a
\emph{universe} in the sense of \citet{streicher:2005}, where \ul{acyclicity}
reduces to the \emph{realignment} property studied in detail by
\citet{gratzer-shulman-sterling:2022:universes}.

\begin{remark}
  It is reasonable to ask whether there is a ``weak'' version of
  \ul{acyclicity} that pertains to stack completions in the same way that
  \ul{weak generic} objects relate to \ul{generic} objects, \eg by asking for a
  cover-cartesian morphism $\Mor|->>|{\tilde{X}}{X}$ and an extension of
  $U\times_X\tilde{X}\to{U}\to T$ along $\Mor{U\times_X \tilde{X}}{\tilde{X}}$. It
  remains somewhat unclear to this author whether this notion is useful, so we
  do not include it in our proposal.
\end{remark}

\begin{scholium}
  In the context of universes, the \emph{realignment} or \ul{acyclicity}
  condition has been referred to
  as ``Axiom ($2'$)'' by \citet{shulman:2015:elegant},
  as ``strictification'' by \citet{orton-pitts:2016},
  as ``stratification'' by \citet{stenzel:2019:thesis},
  as ``alignment'' by \citet{awodey:2021:qms},
  and
  as ``strict gluing'' by \citet{sterling-angiuli:2021}.
  See also \citet{riehl:2022:cirm} for further discussion.
\end{scholium}

\paragraph*{Origins of the terminology}

The origin of the term ``acyclicity'' is explained by \citet{shulman:2019} and
\citet{riehl:2022:cirm} in essentially the following way. Let $\ECat\subseteq
\FFib{\BCat}$ be a full subfibration equipped with a generic object $T$; we
will write $\Core{\ECat}$ for the \emph{groupoid core} of $\ECat$, and for any
$I\in\BCat$ we will write $\FibMor{\Yo{\BCat}{I}}{\BCat}$ for the discrete
fibration whose fiber at $J\in \BCat$ is the set of morphisms $\Mor{J}{I}$.
Then we have a canonical morphism of fibered categories
$\Mor{\Yo{\BCat}{pT}}{\Core{\ECat}}$ corresponding under the fibered Yoneda
lemma~\citep[\S3]{streicher:2021:fib} to $T$ itself. Realignment for $\ECat$ is
then the property that $\Mor{\Yo{\BCat}{pT}}{\Core{\ECat}}$ has the following
(2-categorical) extension property with respect to any monomorphism
$\Mor|>->|[m]{J}{I}\in\mathcal{M}$, in which we write
$\Mor[\floors{X}]{\Yo{pX}}{\Core{\ECat}}$ for the morphism that we identify with
$X\in \ECat$ under the fibered Yoneda lemma:
\[
  \begin{tikzpicture}[diagram]
    \SpliceDiagramSquare{
      west/style = >->,
      nw = \Yo{\BCat}{J},
      sw = \Yo{\BCat}{I},
      ne = \Yo{\BCat}{pT},
      se = \Core{\ECat},
      east = \floors{T},
      west = \Yo{\BCat}{m},
      south = \floors{B},
      north = \Yo{\BCat}{\alpha},
    }
    \draw[->,exists] (sw) to node [desc] {$\exists \Yo{\BCat}{\beta}$} (ne);
  \end{tikzpicture}
\]

More formally, for any square as above commuting up to an isomorphism $\phi :
\floors{B}\circ \Yo{\BCat}{m} \cong \floors{T}\circ \Yo{\BCat}{\alpha}$, the
extension property states that there exists a (not necessarily unique) morphism
$\Mor[\beta]{I}{pT}$ and isomorphisms $\phi_0 : \Yo{\BCat}{\beta}\circ
\Yo{\BCat{m}} \cong \Yo{\BCat}{\alpha}$ and $\phi_1 : \floors{T}\circ
\Yo{\BCat}{\beta}\cong \floors{B}$ such that $\phi$ factors as the pasting of
$\phi_0$ and $\phi_1$.  Moreover, as the boundary of $\phi_0$ is discrete, it
must be an identity and thus $\Mor[\beta]{I}{pT}$ extends $\Mor[\alpha]{J}{pT}$
along $\Mor|>->|{J}{I}$ in the (strict) 1-categorical sense.
Under the fibered Yoneda lemma, the isomorphism $\phi : \floors{B}\circ \Yo{\BCat}{m}
\cong \floors{T}\circ \Yo{\BCat}{\alpha}$ corresponds to an isomorphism
$\ceils{\phi} : m^*B\cong \alpha^*T$ and the isomorphism $\phi_1 :
\floors{T}\circ \Yo{\BCat}{\beta}\cong\floors{B}$ corresponds to an isomorphism
$\ceils{\phi_1} : \beta^*T\cong B$ that extends $\ceils{\phi}$ along
$\Mor|>->|[m]{J}{I}$.

In a model categorical setting where $\mathcal{M}$ is understood to be the
class of cofibrations, the extension property above expresses that
$\Mor{\Yo{\BCat}{pT}}{\Core{\ECat}}$ is an \emph{acyclic fibration}, whence the terminology.

\subsubsection{Incomparability of \texorpdfstring{\ul{acyclic}}{acyclic} and \texorpdfstring{\ul{skeletal generic}}{skeletal generic} objects}

It is not the case that a \ul{skeletal} generic object need be
\ul{acyclic}. We may consider the following counterexample, which was kindly
suggested by one of the anonymous referees of this paper; in what follows,
assume that $\BCat$ is a category with finite limits and that $G$ is a group
object in $\BCat$. As in \cref{ex:delooping}, we may regard $G$ as an internal
groupoid $\Deloop{G}$ in $\BCat$ with a single object $\prn{\Deloop{G}}_0 =
1_{\BCat}$, such that the hom object $\prn{\Deloop{G}}_1$ is exactly $G$.  The
externalization $\brk{\Deloop{G}}$ can be seen to have the same objects as
$\BCat$; a morphism $\Mor{I}{J}$ in $\brk{\Deloop{G}}$ is given by a pair of a
morphism $\Mor[f]{I}{J}$ together with a generalized element $\Mor[u]{I}{G}$.
As in any fibered groupoid, every morphism of $\brk{\Deloop{G}}$ is cartesian.
Observe that the object $T = 1\Sub{\BCat}$ is a \ul{skeletal generic} object in
$\brk{\Deloop{G}}$; below, we show that $T$ need not be \ul{acyclic}.

\begin{lemma}
  The \ul{skeletal generic} object $T\in \brk{\Deloop{G}}$ is
  \ul{acyclic} if and only if $G\in\BCat$ has the right lifting property with respect
  to any element of $\mathcal{M}$.
\end{lemma}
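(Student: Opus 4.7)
The plan is to unfold the acyclicity condition completely for this particular fibration and see it reduce to a plain lifting problem in $\BCat$ against $G$.

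First I would spell out what a morphism into $T=1_{\BCat}$ looks like. Since $T$ lies over $1_{\BCat}$, a morphism $Y\to T$ in $\brk{\Deloop{G}}$ is determined entirely by its second component, namely a generalized element $Y\to G$ (the base component is forced to be the unique map $Y\to 1$). Moreover every morphism of $\brk{\Deloop{G}}$ is cartesian, so ``existence of a cartesian map $X\to T$'' is simply existence of an element $X\to G$, and the given span $X\leftarrow U\to T$ is packaged as (i) a morphism $m\colon U\to X$ in $\BCat$ that lies in $\mathcal{M}$, together with a generalized element $v\colon U\to G$ constituting the cartesian map $U\to X$, and (ii) a generalized element $u\colon U\to G$ constituting the cartesian map $U\to T$.

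Next I would chase the commutativity condition through the composition law of the externalization. Composing $(m,v)\colon U\to X$ with $(!_X,w)\colon X\to T$ yields a morphism $U\to T$ whose second component is the pointwise product in $G$ of $v$ and $w\circ m$ (with whatever sign convention the composition law dictates). Requiring this composite to equal $(!_U,u)$ is thus equivalent to requiring that $w\circ m$ equal a specific generalized element $U\to G$ built from $u$ and $v$ (e.g.\ $v^{-1}\!\cdot\! u$); crucially, as $v$ ranges over \emph{all} generalized elements $U\to G$, the required element $v^{-1}\!\cdot\! u$ ranges over all generalized elements $U\to G$ as well. Hence the acyclicity condition for $T$ is equivalent to the statement that for every $m\in\mathcal{M}$ and every $U\to G$, there exists an extension $X\to G$ along $m$, which is precisely the right lifting property of $G\to 1_{\BCat}$ against $\mathcal{M}$.

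The only mild obstacle is keeping the group-composition conventions straight and checking that the extra datum $v$ on the cartesian map $U\to X$ does not genuinely constrain the problem; the observation above, that varying $v$ reparametrizes the extension target by left (or right) multiplication in $G$, disposes of this cleanly. With that in hand, the two implications in the biconditional are immediate: an acyclic $T$ solves an arbitrary extension problem against $\mathcal{M}$, and conversely, given RLP one exhibits the required cartesian map $X\to T$ by choosing any such extension.
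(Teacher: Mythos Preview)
Your argument is correct and follows the same line as the paper's proof: both unfold the acyclicity condition in $\brk{\Deloop{G}}$ and identify it with an extension problem against $G$. The paper is terser---for the forward direction it simply sets the vertical component of $U\to X$ to be the unit $\epsilon$, and dismisses the converse as ``analogous''---whereas your treatment of the extra datum $v$ via left multiplication in $G$ is exactly what is needed to make that ``analogous'' converse go through for an \emph{arbitrary} cartesian span, so your version is in fact slightly more explicit.
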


\begin{proof}
  Suppose that $T$ is an \ul{acyclic generic} object, and fix a monomorphism
  $\Mor|>->|[m]{J}{I}\in \mathcal{M}$ together with a generalized element
  $\Mor[g]{J}{G}$ in $\BCat$. We may exhibit a span of cartesian maps in
  $\brk{\Deloop{G}}$ as follows, which has a lift by \ul{acyclic}ity:
  \[
    \begin{tikzpicture}[diagram]
      \node (U) {$J$};
      \node (T) [right = of U] {$T$};
      \node (X) [below = of U] {$I$};
      \draw[->] (U) to node [above] {$\prn{!\Sub{J}, g}$} (T);
      \draw[->] (U) to node [left] {$\prn{m,\epsilon}$} (X);
      \draw[exists,->] (X) to node [sloped,below] {$\prn{h,\hat{g}}$} (T);
    \end{tikzpicture}
  \]

  Above we have exactly a morphism $\Mor[\hat{g}]{I}{G}$
  extending $\Mor[g]{J}{G}$ along $\Mor|>->|[m]{J}{I}$.
  The other direction is analogous.
\end{proof}

It is clear that an \ul{acyclic generic} object need not be \ul{skeletal}; the
above discussion confirms that \ul{acyclic} and \ul{skeletal generic} objects
are in fact incomparable.

\subsection{Splittings and generic objects}

We have shown in \cref{sec:weak-and-split} that the correct generalization to
non-split fibrations of a \textbf{split generic} object in the sense of
\citet{jacobs:1999} is what we have proposed to call a \ul{generic} object, \ie
the \textbf{weak generic} object of \opcit. Thus we conclude that the correct
relationship can be established between \emph{split $\blacksquare$-fibrations}
and \emph{$\blacksquare$-fibrations} when re-expressed using our definitions, where $\blacksquare$ ranges over the different kinds of fibered structures (\eg $\lambda2$-fibration, polymorphic fibration, \etc).
For example, the following definition expresses the correct relationship
between $\lambda2$-fibrations and split $\lambda2$-fibrations:

\begin{definition}
  A \ul{$\lambda2$-fibration} is a fibration with a \ul{generic} object $T$, fibered
  finite products, and simple $pT$-products and coproducts. A
  \ul{$\lambda2$-fibration} will be called \ul{split} if all its structure is
  split.
\end{definition}

\subsection{In univalent mathematics}

The theory of fibered (1-)categories carries over \emph{mutatis mutandis} to
the univalent foundations of mathematics, as shown by
\citet{ahrens-lumsdaine:2019}. The pertinent notions of generic object are,
however, a bit different in a univalent setting. In particular, \ul{gaunt}
generic objects play a more fundamental role in univalent mathematics than in
non-univalent mathematics, because the uniqueness of the classifying squares
becomes a statement of contractibility rather than strict uniqueness. In
non-univalent mathematics, uncontrived instances of \ul{gaunt} generic objects
tend to follow the pattern of subobject classifiers in representing a fibered
poset; in univalent mathematics, we also have \emph{object} classifiers
(univalent universes) which are \ul{gaunt} generic objects for groupoid cores
of the full subfibrations they induce.

Just as in non-univalent mathematics, \ul{skeletal generic} objects are not
expected to play a significant structural role although they may appear when
considering the deloopings of groups, as in our 1-categorical examples. The
most interesting case is that of \ul{acyclic generic} objects: the
\ul{acyclicity} property is the correct way to present in non-univalent
mathematics a generic object that is \ul{gaunt} \emph{in the homotopical
sense}, as Shulman has recently argued~\citep{shulman:2022:grothendieck:talk}.
Thus in a homotopical/univalent setting, one expects to work directly with
\ul{gaunt generic} objects rather than \ul{acyclic} ones.

\section{Concluding remarks}

For a number of years, the disorder in the variants of generic object has led
to a proliferation of subtle differences in terminology between different
papers applying fibered categories to categorical logic, type theory, and the
denotational semantics of polymorphic types. Based on the kinds of generic
object that occur most naturally or have the most utility, we have proposed a
unified terminological scheme for generic objects that we believe will meet the
needs of scientists working in these areas.

\subsection*{Acknowledgments}
I thank Lars Birkedal, Daniel Gratzer, Patricia Johann, and Peter LeFanu
Lumsdaine for helpful conversations on the topic of this paper. I am
particularly grateful to the anonymous referees whose suggestions and
corrections have substantially improved this paper.
This work was supported by a Villum Investigator grant (no.~25804), Center for
Basic Research in Program Verification (CPV), from the VILLUM Foundation. This
work was co-funded by the European Union. Views and opinions expressed are
however those of the author only and do not necessarily reflect those of the
European Union or the European Commission. Neither the European Union nor the
granting authority can be held responsible for them.

\bibliographystyle{plainnat}
\bibliography{references/refs-bibtex}

\end{document}